\newtheorem{theorem}{Theorem}
\newtheorem{remark}{Remark}
\begin{document}
%
\title{Theory and Application on Adaptive-Robust Control of Euler-Lagrange Systems with Linearly Parametrizable Uncertainty Bound}
%
%
%

\author{Spandan~Roy,~Sayan Basu Roy
        and~Indra~Narayan~Kar,~\IEEEmembership{Senior~Member,~IEEE}
\thanks{S. Roy, S. Basu Roy and I. N. Kar are with the Department
of Electrical Engineering, Indian Institute of Technology-Delhi, New Delhi,
India e-mail: (sroy002@gmail.com, sayanetce@gmail.com, ink@ee.iitd.ac.in).}}
\maketitle

\begin{abstract}

This work proposes a new adaptive-robust control (ARC) architecture for a class of uncertain Euler-Lagrange (EL) systems where the upper bound of the uncertainty satisfies linear in parameters (LIP) structure. Conventional ARC strategies either require structural knowledge of the system or presume that the overall uncertainties or its time derivative is norm bounded by a constant. Due to unmodelled dynamics and modelling imperfection, true structural knowledge of the system is not always available. Further, for the class of systems under consideration, prior assumption regarding the uncertainties (or its time derivative) being upper bounded by a constant, puts a restriction on states beforehand. Conventional ARC laws invite overestimation-underestimation problem of switching gain. Towards this front, Adaptive Switching-gain based Robust Control (ASRC) is proposed which alleviates the overestimation-underestimation problem of switching gain. Moreover, ASRC avoids any presumption of constant upper bound on the overall uncertainties and can negotiate uncertainties regardless of being linear or nonlinear in parameters. Experimental results of ASRC using a wheeled mobile robot notes improved control performance in comparison to adaptive sliding mode control.
\end{abstract}

\begin{IEEEkeywords}
Adaptive-robust control, Euler-Lagrange systems, Wheeled mobile robot, Uncertainty.
\end{IEEEkeywords}

\section{Introduction}
%
%
%
%
\subsection{Background}
\IEEEPARstart{T}{he} controller design aspect for nonlinear systems subjected to parametric and nonparametric uncertainties has always been a challenging task. Adaptive control and Robust control are the two popular control strategies to deal with uncertain nonlinear systems. In case of adaptive control, online computation of the unknown system parameters and controller gains for complex systems is significantly intensive \cite{Ref:2}. On the other front, robust control reduces computation burden for complex systems compared to adaptive control, while requiring a predefined upper bound on the uncertainties. 
However, in practice it is not always possible to estimate a prior uncertainty bound due to the effect of unmodelled dynamics. Again, to increase the operating region of the controller, often higher uncertainty bounds are assumed. This in turn leads to overestimation of switching gain and high control effort \cite{Ref:23}.
\par Considering the individual constraints of adaptive and robust control, recently global research is reoriented towards adaptive-robust control (ARC). The series of publications \cite{Ref:2}, \cite{Ref:13}-\cite{Ref:19} regarding ARC, estimate the individual uncertain system parameters through adaptive law and robust control is utilized to negate the effect of external disturbances. These works utilize the projection operator in their respective adaptive laws which necessitate the knowledge of lower and upper bound of individual uncertain system parameters. Adaptive sliding mode control (ASMC) is designed in \cite{Ref:32-2} for parameter identification of mechanical servo systems with LuGre friction considering the uncertainties to be linear in parameters (LIP). In contrast, the controllers \cite{Ref:20}-\cite{Ref:frid} assume that the overall uncertainty (or its time derivative) is bounded by some constant. Thereafter, that constant term is estimated by adaptive law, rather estimating individual uncertain system parameters. The adaptive laws in \cite{Ref:21}-\cite{Ref:22} involve a predefined threshold value; as a matter of fact, until the threshold value is achieved, the switching gain may still be increasing (resp. decreasing) even if the tracking error decreases (resp. increases) and thus creates overestimation (resp. underestimation) problem of switching gain.{While the underestimation problem compromises the controller accuracy by applying lower switching gain than the required amount, the overestimation problem causes larger gain and high control input requirement. 
The adaptive law reported in \cite{Ref:ut} requires predefined bound on the time derivative of the uncertainties. As observed in \cite{Ref:frid}, the method in \cite{Ref:ut} also requires frequency characteristics of the perturbation to design the filter for equivalent control. However, the work in \cite{Ref:frid} assumed that the time derivative of the uncertainties are bounded by an unknown constant. 
\subsection{Motivation}
Let us consider the following system representing a chemostat operating under Monod kinetic \cite{Ref:bio}:
\begin{align}
&\dot{z}_1=f_1(z_1,z_2)-Dz_1,\dot{z}_2=f_2(z_1,z_2)+S_0-Dz_2, \label{b}\\
&\text{where}~f_1(z_1,z_2)=\frac{\delta_1 z_1 z_2}{\delta_2+z_2}, f_2(z_1,z_2)=\frac{-\delta_3 z_1 z_2}{\delta_2+z_2}. \nonumber
\end{align}
Here $z_1 \geq 0, z_2 \geq 0$ $\forall t \geq 0$ are states; $\delta_1, \delta_2, \delta_3$ are uncertain positive parameters; $S_0$ is a known constant and $D$ is the control input. For the system (\ref{b}), the following relations hold:
\begin{align}
&|f_1| \leq y_1 f(z), |f_2| \leq y_2f(z), \label{b1} \\
&\text{where}~y_1=|\delta_1|/|\delta_2|, y_2=|\delta_3|/|\delta_2|, f(z)=|z_1||z_2|. \nonumber
\end{align}
Inspection of the uncertainties $f_1$ and $f_2$ from (\ref{b})-(\ref{b1}) reveal that though $f_1,f_2$ are nonlinear in parameters (NLIP) but their upper bounds are LIP.  

Similarly Euler-Lagrange (EL) systems can have uncertainties with LIP or NLIP (e.g. system with nonlinear friction (\cite{Ref:nlip})) structure. However, the upper bound of the overall (or lumped) uncertainty for such systems has LIP property \cite{Ref:spong}. EL systems, in general, represent a large class of real world systems like robotic manipulators \cite{Ref:n1}-\cite{Ref:n2}, mobile robots \cite{Ref:self}, ship dynamics, aircraft, pneumatic muscles \cite{Ref:n3} etc. These systems have immense applications in various domains such as defence, automation industry, surveillance, space missions etc.  
The controllers \cite{Ref:20}-\cite{Ref:22} assume that the overall uncertainties are upper bounded by some constant, while \cite{Ref:ut}-\cite{Ref:frid} assume the time derivative of the overall uncertainty to be bounded by some constant. Hence, for the aforementioned class of systems, consideration of such constant bound (known or unknown) restricts the system state a priori. Further, the switching gain in \cite{Ref:21}-\cite{Ref:22} suffers from over- and under-estimation problems. In practice, it is also not always possible to have prior knowledge of bounds for system parameters as required in \cite{Ref:2}, \cite{Ref:13}-\cite{Ref:19} for projection operator. 
\subsection{Contribution}\label{sec 1.2}
In view of the above discussion and the importance of EL systems in real-life scenarios, it is imperative to formulate a dedicated ARC framework for uncertain EL systems. Towards this front, 
Adaptive Switching-gain based Robust Control (ASRC) is presented in this paper for tracking control of uncertain EL systems. The formulation of ASRC is insensitive towards the nature of the uncertainties, i.e., it can negotiate uncertainties that can be either LIP or NLIP. ASRC utilizes \textbf{LIP structure of the upper bound of uncertainty} and does not presume the overall uncertainty (or its time derivative) to be upper bounded by a constant. The adaptive law of ASRC prevents the switching gain from becoming a monotonically increasing function by allowing the switching gain to decrease within a finite time when tracking error decreases. Moreover, ASRC alleviates the overestimation-underestimation problem of switching gain. To realize the effectiveness, the performance of ASRC is compared with ASMC \cite{Ref:21}-\cite{Ref:22} experimentally using PIONEER 3 wheeled mobile robot (WMR).  


\subsection{Organization and Notations}
The remainder of the article is organized as follows: The proposed ASRC framework for second order EL systems is detailed in Section II. This is followed by the experimental results of ASRC and its comparison with ASMC \cite{Ref:21}-\cite{Ref:22} in Section III. Section IV presents concluding remarks.
\par The following notations are used in this paper: $\lambda _{\min}(\bullet)$ and $|| \bullet ||$ represent minimum eigenvalue and Euclidean norm of $(\bullet)$ respectively; $I$ denotes identity matrix with appropriate dimension; $\mathbb{R}^{+}$ denotes the set of positive real numbers.
\section{Controller Design}
\subsection{Problem Formulation}
In general, an EL system with second order dynamics can be written as
\begin{equation}\label{sys}
M(q)\ddot{q}+C(q,\dot{q})\dot{q}+g(q)+f(\dot{q})+d_s=\tau,
\end{equation}
where $q\in\mathbb{R}^{n}$ denotes system state, $\tau\in\mathbb{R}^{n}$ denotes vector of generalized control input forces, $M(q)\in\mathbb{R}^{n\times n}$ represents mass/inertia matrix, $C(q,\dot{q})\in\mathbb{R}^{n\times n}$ denotes Coriolis, centripetal terms, $g(q)\in\mathbb{R}^{n}$ denotes gravity vector, $f(\dot{q})\in\mathbb{R}^{n}$ represents the vector of slip, damping and friction forces and $d_s(t)$ denotes the bounded external disturbances. The system (\ref{sys}) possesses the following properties \cite{Ref:spong}: \\ 
\textbf{Property 1:} 
The matrix $(\dot{M}-2{C})$ is skew symmetric. \\
\textbf{Property 2:} $\exists g_b,f_b,\bar{d} \in \mathbb{R}^{+}$ such that $||g(q)|| \leq g_b$, $||f(\dot{q})|| \leq f_b||\dot{q}||$ and $||d_s(t)|| \leq \bar{d}$.\\
\textbf{Property 3:} The matrix $M(q)$ is uniformly positive definite and there exist two positive constants $\mu_1, \mu_2$ such that
\begin{equation}\label{prop 3}
0 < \mu_1 I \leq M(q) \leq \mu_2 I .
\end{equation}
\textbf{Property 4:} $\exists C_b \in \mathbb{R}^{+}$ such that $||C(q,\dot{q})|| \leq C_b ||\dot{q}||$. 

Let $q^d(t)$ is the desired trajectory to be tracked and it is selected such that $q^d, \dot{q}^d, \ddot{q}^d \in \mathcal{L}_{\infty}$.
Let $e(t) \triangleq q(t)-q^d(t)$ be the tracking error and $e_f$ be the filtered tracking error:
\begin{equation}
e_f \triangleq \dot{e}+ \Omega e \Rightarrow e_f= \Gamma \xi, \label{r}
\end{equation}
where $\Gamma \triangleq [\Omega~ I]$, $\xi \triangleq [e^T ~ \dot{e}^T]^T$ and $\Omega \in \mathbb{R}^{n \times n}$ is a positive definite matrix. Multiplying the time derivative of (\ref{r}) by $M$ and using (\ref{sys}) yields
\begin{align}
M\dot{e}_f&=M(\ddot{q}-\ddot{q}^d+\Omega \dot{e})= \tau-C(q,\dot{q})e_f+\sigma, \label{mr dot}
\end{align}
where $\sigma \triangleq -(C(q,\dot{q})\dot{q}+g(q)+f(\dot{q})+d_s+M\ddot{q}^d-M\Omega \dot{e}-C(q,\dot{q})e_f)$ represents the overall uncertainty. Further $\xi=[e^T ~ \dot{e}^T]^T $ implies $||\xi|| \geq ||e|| , ||\xi|| \geq || \dot{e} || $. 

\textbf{Characterization of the upper bound of $\sigma$:} 
Relation (\ref{r}) and system Property 4 yields
\begin{align}
&||C e_f -  C\dot{q}|| =||C(\dot{e}+ \Omega e)-C\dot{q}||=|| - C\dot{q}^d +C\Omega e  || \nonumber\\
& \leq  C_b ||\dot{q}|| || \dot{q}^d|| + C_b ||\dot{q}||||\Omega|| ||e|| \nonumber \\
& \leq C_b ||\dot{e}+\dot{q}^d || || \dot{q}^d|| + C_b ||\dot{e}+\dot{q}^d || ||\Omega || ||\xi ||  \nonumber \\
& \leq C_b \lbrace ||\xi|||| \dot{q}^d|| + || \dot{q}^d||^2+  ||\xi||^2 ||\Omega ||+ || \dot{q}^d || ||\Omega || ||\xi || \rbrace. \label{i}
\end{align}
Further, system Properties 2 and 3 provide the following:
\begin{align}
&||g(q)+f(\dot{q})+d_s+M\ddot{q}^d-M\Omega \dot{e}|| \nonumber\\
&\leq g_b+f_b||\dot{q}||+\bar{d}+\mu_2||\ddot{q}^d||+\mu_2 ||\Omega|| ||\dot{e}|| \nonumber\\
&\leq g_b+f_b||\dot{e}+\dot{q}^d||+\bar{d}+\mu_2||\ddot{q}^d||+\mu_2 ||\Omega|| ||\xi|| \nonumber\\
& \leq g_b+f_b||\xi||+f_b||\dot{q}^d||+\bar{d}+\mu_2||\ddot{q}^d||+\mu_2 ||\Omega|| ||\xi||. \label{ii}
\end{align}
Since $q^d, \dot{q}^d, \ddot{q}^d \in \mathcal{L}_{\infty}$, it can be verified using (\ref{i})-(\ref{ii}) that $\exists \theta_i^{*} \in \mathbb{R}^{+}$ $i=0,1,2$ such that the upper bound of $\sigma$ holds the following LIP structure \cite{Ref:spong}:
\begin{equation}
||\sigma ||\leq \theta_0^{*}+\theta_1^{*}||\xi||+ \theta_2^{*}||\xi||^2  \triangleq Y(\xi)^T \Theta^{*}, \label{sigma1} 
\end{equation}
where $Y(\xi)=[1 ~ ||\xi|| ~ ||\xi||^2]^T$ and $\Theta^{*}=[\theta_0^{*} ~ \theta_1^{*} ~ \theta_2^{*}]^T$.

Let $\bar{\Theta} \triangleq \lbrace {\Theta} \in \mathbb{R}^{3} : {\theta}_i \geq \theta_i^{*} ~\forall i=0,1,2 \rbrace$ such that the following condition always holds from (\ref{sigma1}):
\begin{equation}\label{bar gamma}
||\sigma|| \leq Y(\xi)^{T}{\Theta}, ~~ \forall \Theta \in \bar{\Theta}.
\end{equation}
A robust controller for the system (\ref{sys}) can be designed as \cite{Ref:spong}
\begin{align} 
&\tau =-e-G e_f-\Delta \tau, ~ \Delta \tau=\begin{cases}
    {\rho}\frac{e_f}{|| e_f||}       & ~ \text{if } || e_f|| \geq \varpi\\
    {\rho}\frac{e_f}{\varpi}        & ~ \text{if } || e_f || < \varpi,\\
  \end{cases}  \label{input rob}\\
&  {\rho}= Y(\xi)^{T} {\Theta} ,  \label{c1}
\end{align}
where $\Delta \tau$ provides robustness against $\sigma$ through switching gain ${\rho}$; $\varpi \in \mathbb{R}^{+}$ is a small scalar used for chattering removal; $G \in \mathbb{R}^{n \times n}$ is a positive definite matrix. 

\textbf{Evaluation of Switching Gain:}
Evaluation of $\rho$ like (\ref{c1}) is conservative in nature and evidently requires the knowledge of $\Theta^{*}$, which is not always possible in the face of uncertain parametric variations and external disturbances.   
The control laws developed in \cite{Ref:20}-\cite{Ref:22} and \cite{Ref:ut}-\cite{Ref:frid} assume that $\sigma$ and $\dot{\sigma}$ is upper bounded by constant, respectively. Exploring the structure of $||\sigma||$ from (\ref{sigma1}) it can be easily inferred that such constant bound assumption on the uncertainties, whether known or unknwon, puts a restriction on the states a priori. Moreover, the switching gain in \cite{Ref:21}-\cite{Ref:22} suffers from overestimation-underestimation problem.
\subsection{Adaptive Switching-gain based Robust Control (ASRC)}
The major aims of the proposed ASRC framework are:
\begin{itemize}
\item To compensate the uncertainties that can be either LIP or NLIP. However, the upper bound of the uncertainties satisfies the LIP property (\ref{sigma1}).
\item To alleviate the overestimation-underestimation problem of switching gain. 
\end{itemize}
The control input of the proposed ASRC is designed as
\begin{align}
&\tau =-e-G e_f-\Delta \tau, ~ \Delta \tau=\begin{cases}
    \hat{\rho}\frac{e_f}{|| e_f||}       & ~ \text{if } || e_f|| \geq \varpi\\
    \hat{\rho}\frac{e_f}{\varpi}        & ~ \text{if } || e_f || < \varpi,\\
  \end{cases}  \label{input}\\
& \hat{\rho}=\hat{\theta}_0+\hat{\theta}_1||\xi||+ \hat{\theta}_2||\xi||^2+\gamma \triangleq Y(\xi)^T \hat{\Theta}+\gamma,\label{rho}
\end{align}
where $\Delta \tau$ provides robustness against $\sigma$ through $\hat{\rho}$; $\hat{\Theta}=[\hat{\theta}_0 ~ \hat{\theta}_1 ~\hat{\theta}_2]^T$ is the estimate of ${\Theta}$; $\gamma$ is an auxiliary gain. The importance of $\gamma$ will be explained later. The gains $\gamma, \hat{\theta}_i$, $i=0,1,2$ are evaluated using the following adaptive laws: 
\begin{align} 
(i) &~ \text{for}~|| e_f || \geq \varpi \nonumber\\
 \dot{\hat{\theta}}_i=&
  \begin{cases}
   {\alpha}_i ||\xi||^i ||e_f|| ~~\text{if}~\lbrace e^T \dot{e} >0 \rbrace \cup \lbrace \bigcup_{i=0}^{2}\hat{\theta}_i \leq 0 \rbrace \\
  \qquad \qquad \qquad \qquad \qquad \quad \cup \lbrace \gamma \leq \beta \rbrace \\
   -{\alpha}_i ||\xi||^i ||e_f||~~\text{otherwise},
   \end{cases} \label{gain 11} \\
  \dot{\gamma}= &
 \begin{cases}
 \alpha_{3} ||e_f||  ~~\text{if}~\lbrace e^T \dot{e} >0 \rbrace \cup \lbrace \bigcup_{i=0}^{2}\hat{\theta}_i \leq 0 \rbrace \\
 \qquad \qquad \qquad \qquad \qquad \quad \cup \lbrace \gamma \leq \beta \rbrace \\
 -\varsigma \alpha_{3} ||\xi||^{4} ~~\text{otherwise}, \\
 \end{cases} \label{gain 222} \\
(ii)&~\text{for}~ || e_f|| < \varpi \nonumber\\ 
 & \dot{\hat{\theta}}_i=0, \dot{\gamma}=0, \label{gain 22} \\
 \text{with}~ & \hat{\theta}_i(t_0) > 0, i=0,1,2,~ \gamma (t_0)> \beta. \label{init}
\end{align}
Here $t_0$ is the initial time and $\beta,\varsigma, \alpha_0, \alpha_1, \alpha_2, \alpha_3  \in \mathbb{R}^{+}$ are user defined scalars. Substituting (\ref{input}) into (\ref{mr dot}), the closed loop system is formed as:
\begin{align}
M\dot{e}_f&=  -e-G e_f-\Delta \tau -C e_f +\sigma. \label{cl loop}
\end{align}
\begin{remark}
For $||e_f|| \geq \varpi$, it can be noticed from the adaptive laws (\ref{gain 11})-(\ref{gain 222}) that the gains $\hat{\theta}_i, \gamma$ increase if error trajectories move away from $||e||=0$ (governed by $e^T\dot{e}>0$) and decrease if error trajectories do not move away from $||e||=0$ (governed by the 'otherwise' condition in (\ref{gain 11})-(\ref{gain 222}) which implies $\lbrace e^T\dot{e} \leq 0 \rbrace \cap \lbrace \bigcup_{i=0}^{2}\hat{\theta}_i >0 \rbrace \cap \lbrace \gamma>\beta \rbrace$). Hence, the proposed law certainly does not make the switching gain a monotonically increasing function and thus alleviates the overestimation problem.
\end{remark}
\begin{remark}
For $||e_f|| < \varpi$, the tracking error remains bounded inside the ball $B_{\varpi} \triangleq \lbrace||\Gamma \xi|| < \varpi \rbrace$ using the relation $e_f=\Gamma \xi$. This implies that the switching gains are sufficient enough to keep the error within $B_{\varpi}$. Hence, the gains are kept unchanged for $||e_f|| < \varpi$. One can choose small $\varpi$ to improve tracking accuracy (as $B_{\varpi}$ gets reduced) as long as the value of $\varpi$ does not invite chattering.   
\end{remark}
\begin{remark}
The initial condition of the gains are selected as $\hat{\theta}_i(t_0) > 0, ~ \gamma (t_0)> \beta$. Further, for $||e_f|| \geq \varpi$, the adaptive laws (\ref{gain 11})-(\ref{gain 222}) force the gains to increase if either of the gains attempt to breach their respective lower bounds (governed by $\lbrace \bigcup_{i=0}^{2}\hat{\theta}_i \leq 0 \rbrace \cup \lbrace \gamma \leq \beta \rbrace$). This ensures that $\gamma (t) \nless \beta, ~ \hat{\theta}_i(t) \nless 0$ $\forall i=0,1,2$ when $||e_f|| \geq \varpi$. Again, gains remain unchanged for $||e_f|| < \varpi$. Hence, combination of the conditions mentioned above implies
%
\begin{align}
\hat{\theta}_i(t) \geq 0 ~\forall i=0,1,2 ~ \text{and}~ \gamma (t) \geq \beta ~~\forall t\geq t_0. \label{low bound}
\end{align} 
The condition (\ref{low bound}) is later exploited in stability analysis.
\end{remark}
To guarantee the alleviation of the overestimation problem of switching gain, it is necessary that $\hat{\theta}_i, \gamma$ decrease within a finite time. This is shown through Theorem \ref{th finite time }.
\begin{theorem} \label{th finite time }
Let $t=t_{in}$ be any time instant when gains start increasing. Then there exist finite times $t_1,t_2,t_3, \delta t$ such that the gains $\hat{\theta}_0, \hat{\theta}_1, \hat{\theta}_2, \gamma$ decrease for $t\geq t_{in}+T$ where $T \leq \bar{t}+\delta t$, $ \bar{t}=max \lbrace t_1,t_2,t_3\rbrace$. These times are obtained as 
\begin{align}
t_1&\leq \frac{\theta_0^{*}}{(\alpha_0+\alpha_3) \varpi }, t_2 \leq \frac{\theta_1^{*} ||\Gamma|| }{\alpha_1\varpi^2},
t_3 \leq \frac{\theta_2^{*}|| \Gamma || ^2}{\alpha_2\varpi^3}, \label{t}\\
\delta t & \leq (1/ \varrho) \text{ln} \lbrace 2 V (\bar{t})/(||e(\bar{t})||^2) \rbrace ,
\end{align}
\end{theorem}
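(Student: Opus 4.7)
The plan is to split the interval $[t_{in},t_{in}+T]$ into a gain-building phase of duration $\bar t$ followed by an error-shrinking phase of duration $\delta t$. From Remark~3 and (\ref{init}), the ``back-stop'' branches $\hat\theta_i\leq 0$ and $\gamma\leq\beta$ in (\ref{gain 11})-(\ref{gain 222}) are never active, so the only mechanism that enlarges the gains is the condition $e^T\dot e>0$ combined with $\|e_f\|\geq\varpi$. Thus $t_{in}$ is such a moment and the whole analysis is carried out under $\|e_f\|\geq\varpi$.

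In Phase~1, I would lower-bound the gain-growth rates. The identity $e_f=\Gamma\xi$ gives $\|\xi\|\geq\|e_f\|/\|\Gamma\|\geq\varpi/\|\Gamma\|$, so the increase branch of (\ref{gain 11})-(\ref{gain 222}) yields $\dot{\hat\theta}_1\geq\alpha_1\varpi^2/\|\Gamma\|$, $\dot{\hat\theta}_2\geq\alpha_2\varpi^3/\|\Gamma\|^2$, and, after summing, $\dot{\hat\theta}_0+\dot\gamma\geq(\alpha_0+\alpha_3)\varpi$. Integrating from the conservative initial values $\hat\theta_i(t_{in})=0$ and $\gamma(t_{in})=0$ (which are the least admissible under (\ref{low bound})), the thresholds $\hat\theta_1\geq\theta_1^{*}$, $\hat\theta_2\geq\theta_2^{*}$, and $\hat\theta_0+\gamma\geq\theta_0^{*}$ become attainable after the durations $t_2, t_3, t_1$ respectively, which are precisely the bounds in (\ref{t}). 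Then (\ref{rho}) and (\ref{sigma1}) give $\hat\rho\geq\theta_0^{*}+\theta_1^{*}\|\xi\|+\theta_2^{*}\|\xi\|^2\geq\|\sigma\|$ for all $t\geq t_{in}+\bar t$.

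In Phase~2, I would take $V=\tfrac{1}{2}e_f^T M e_f+\tfrac{1}{2}e^T e$ and differentiate along (\ref{cl loop}). Property~1 cancels $\tfrac{1}{2}e_f^T(\dot M-2C)e_f$; the substitution $-e_f^T e=-e^T\dot e-e^T\Omega e$ makes the two $e^T\dot e$ terms cancel; and on $\|e_f\|\geq\varpi$ one has $e_f^T\Delta\tau=\hat\rho\|e_f\|$. The remaining terms bound as $\dot V\leq-\lambda_{\min}(\Omega)\|e\|^2-\lambda_{\min}(G)\|e_f\|^2+\|e_f\|(\|\sigma\|-\hat\rho)$. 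Phase~1 kills the last term, and combining with $V\leq\tfrac{\mu_2}{2}\|e_f\|^2+\tfrac{1}{2}\|e\|^2$ (Property~3) yields $\dot V\leq-\varrho V$ with $\varrho=\min\{2\lambda_{\min}(\Omega),\,2\lambda_{\min}(G)/\mu_2\}$. Integration gives $V(t)\leq V(\bar t)\,e^{-\varrho(t-\bar t)}$, and using $V\geq\tfrac{1}{2}\|e\|^2$ I solve $2V(\bar t)e^{-\varrho\delta t}\leq\|e(\bar t)\|^2$ for the stated $\delta t\leq(1/\varrho)\ln\{2V(\bar t)/\|e(\bar t)\|^2\}$.

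Since $\|e\|^2$ has net-decreased over $[t_{in}+\bar t,t_{in}+\bar t+\delta t]$, the mean value theorem supplies an instant $t^{*}$ in that window at which $\tfrac{d}{dt}\|e\|^2=2e^T\dot e\leq 0$; at $t^{*}$ the ``otherwise'' branch of (\ref{gain 11})-(\ref{gain 222}) is activated and the gains begin to decrease, which yields $T\leq\bar t+\delta t$. The main obstacle I foresee is the Phase~2 bookkeeping, making sure every cross-term cancels so the $-\varrho V$ rate emerges cleanly; a secondary subtlety is verifying that $\hat\rho$ continues to dominate $\|\sigma\|$ throughout the Phase~2 window, which holds because $\hat\theta_i,\gamma$ remain on the increase branch as long as $e^T\dot e>0$.
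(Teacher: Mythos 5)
Your proposal is correct and follows essentially the same route as the paper: the same two-phase split (lower-bounding the gain growth rates via $\|\xi\|\geq\varpi/\|\Gamma\|$ to obtain $t_1,t_2,t_3$, then the Lyapunov function $V=\tfrac{1}{2}e_f^TMe_f+\tfrac{1}{2}e^Te$ with exponential decay to force $\|e\|$ back down to its value at $t_{in}+\bar t$, whence $e^T\dot e\leq 0$ activates the decrease branch). The only cosmetic differences are your slightly sharper decay constant $\varrho$ (which still implies the stated bound on $\delta t$) and the explicit mean-value-theorem phrasing of the final step.
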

\noindent where $V=\frac{1}{2}e_f^T M e_f+\frac{1}{2}e^Te$, $\varrho \triangleq \frac{\min\lbrace \lambda_{\min}(G), \lambda_{\min}(\Omega)\rbrace}{ \max \lbrace \mu_2, 1\rbrace} $.
\begin{proof}
Here, $t_{in}$ can be any time when gains start increasing and it is solely used for analysis. The objective of Theorem \ref{th finite time } is to find when the gains start to decrease. Further, it is to be noted from the laws (\ref{gain 11})-(\ref{gain 22}) that the gains increase only when $||e_f|| \geq \varpi $. So, it is sufficient to investigate the condition when all the gains increase and $||e_f|| \geq \varpi$. Moreover, using $e_f= \Gamma \xi$ from (\ref{r}) one has
\begin{align}
&\varpi \leq ||e_f|| \leq || \Gamma|| ||\xi|| \Rightarrow  ||\xi|| \geq (\varpi / || \Gamma||). \label{e}
\end{align}
So, the first laws of (\ref{gain 11}), (\ref{gain 222}) and the condition (\ref{e}) yields
\begin{equation}
\dot{\hat{\theta}}_0 \geq \alpha_0 \varpi,\dot{\hat{\theta}}_1 \geq (\alpha_1 \varpi^2)/ ||\Gamma||, \dot{\hat{\theta}}_2 \geq (\alpha_2 \varpi^3)/ || \Gamma|| ^2, \dot{\gamma} \geq \alpha_3 \varpi. \label{inequal 1}
\end{equation}
Let $V$ be a Lypaunov function. Using (\ref{cl loop}) and the relation $e^T\dot{e}=e^T(e_f-\Omega e)$ (from (\ref{r})), the time derivative of $V$ yields 
\begin{align}
\dot{V}&=e_f^TM\dot{e}_f+(1/2)e_f^T\dot{M}e_f +e^T\dot{e}\nonumber\\
& =e_f^T(-e-G e_f-\Delta \tau +\sigma)+(1/2)e_f^T(\dot{M}-2C)e_f \nonumber\\
& \qquad \qquad \qquad +e^T(e_f-\Omega e). \label{vdot old} 
\end{align}
Further, substituting (\ref{input}) into (\ref{vdot old}) and using Property 1 (implying $e_f^T(\dot{M}-2C)e_f=0$), $\dot{V}$ is simplified as
\begin{align}
\dot{V}& = -e_f^TG e_f -e^T \Omega e+e_f^T(-\hat{\rho}({e_f}/{|| e_f ||}) +\sigma)\nonumber\\
& \leq -e_f^T G e_f -e^T \Omega e-(Y(\xi)^T \hat{\Theta}+ \gamma) ||e_f||+ Y(\xi)^T {\Theta^{*}}  ||e_f||\nonumber\\
& \leq -\lambda_{\min}(G)||e_f||^2-\lambda_{\min}(\Omega)||e||^2 -\lbrace(\hat{\theta}_0+\gamma -\theta_0^{*})\nonumber\\ 
&\qquad \qquad +(\hat{\theta}_1-\theta_1^{*}) ||\xi||+(\hat{\theta}_2 -\theta_2^{*})||\xi||^2\rbrace ||e_f||. \label{vdot old 1}
\end{align} 
Thus, the sufficient condition to achieve $\dot{V} <0$ would be  
\begin{equation}
\hat{\theta}_0+\gamma \geq\theta_0^{*}, ~\hat{\theta}_1\geq \theta_1^{*} ~\text{and} ~\hat{\theta}_2 \geq \theta_2^{*}. \label{inequal 2}
\end{equation}  
Let the system (\ref{sys}) does not posses finite time escape \cite{Ref:hadad}. Then integrating both sides of the inequalities in (\ref{inequal 1}) and using those results in (\ref{inequal 2}) lead to the expressions of $t_1,t_2,t_3$ in (\ref{t}). So, for $t \geq t_{in}+\bar{t}$
\begin{align}
\dot{V}  & \leq -\lambda_{\min}(G)||e_f||^2-\lambda_{\min}(\Omega)||e||^2 \nonumber \\
& \leq -{\varrho}_m  (||e_f||^2+ ||e||^2) ,\label{1}
\end{align}
where ${\varrho}_m \triangleq \min\lbrace \lambda_{\min}(G), \lambda_{\min}(\Omega)\rbrace $. Further, the definition of $V$ yields
\begin{align}
V& \leq  {\varrho}_M (||e_f||^2+ ||e||^2), \label{2}
\end{align}
where ${\varrho}_M \triangleq \max \lbrace \mu_2, 1\rbrace $. Substituting (\ref{2}) into (\ref{1}) and using the comparison Lemma \cite{Ref:khalil} yields
\begin{align}
\dot{V} & \leq - \varrho {V} \Rightarrow
 {V} (t)  \leq {V} (t_{in}+\bar{t})e^{-\varrho (t-\bar{t})} ~\forall t \geq t_{in}+\bar{t},  \label{3}
\end{align}
where $\varrho \triangleq {\varrho}_m / {\varrho}_M$. Here $\hat{\theta}_i>0, \gamma >\beta $ as gains were increasing. So, to ensure the `\textit{otherwise}' condition (i.e. $\lbrace e^T\dot{e} \leq 0 \rbrace \cap \lbrace \bigcup_{i=0}^{2}\hat{\theta}_i >0 \rbrace \cap \lbrace \gamma>\beta \rbrace$), the condition $e^T\dot{e} \leq 0$ (i.e. $||e (t)|| $ does not increase) needs to take place. From the definition of $V$, the upper bound of $e$ follows
\begin{align}
V(t) \geq (1/2)||e (t)||^2 \Rightarrow ||e (t)|| \leq \sqrt{2V(t)}~ ~\forall t \geq t_0. \label{4}
\end{align}
Let $||e(t_{in}+\bar{t})||=\psi$ which implies $ V(t_{in}+\bar{t}) \geq \psi^2 /2$ from (\ref{4}). As $V(t)$ decreases exponentially $\forall {t \geq t_{in}+\bar{t}}$ following (\ref{3}), there exist a finite time $\delta t=t-(t_{in}+\bar{t})$ such that $V(t_{in}+\bar{t}+\delta t) = \psi^2 /2$ implying $||e (t_{in}+\bar{t}+\delta t)|| \leq \psi$. So, $\lbrace e^T\dot{e} \leq 0 \rbrace \cap \lbrace \bigcup_{i=0}^{2}\hat{\theta}_i >0 \rbrace \cap \lbrace \gamma>\beta \rbrace$ would occur at $t\geq t_{in}+T$ where $T \leq \bar{t}+ \delta t$ and $\hat{\theta}_i, \gamma$ would start decreasing. The time $\delta t$ is found from (\ref{3}):
\begin{align}
 &\psi^2 \leq 2V(t_{in}+\bar{t})e^{-\varrho \delta t}, ~  \quad \forall t \geq t_{in}+\bar{t} \nonumber\\
\Rightarrow \delta t & \leq (1/ \varrho) \text{ln} \lbrace 2 V(t_{in}+\bar{t})/\psi^2 \rbrace.
\end{align}
\end{proof}
\begin{remark}
The increment and decrement of $\hat{\theta}_i, \gamma$ can occur several times depending on the error incurred by the system. However, time interval $\Delta t$ between two successive decrement will always satisfy $\Delta t \leq \bar{t}+\delta t$. Moreover, high values of $\alpha_0, \alpha_1, \alpha_2, \alpha_3 $ help to reduce $\bar{t}$ and achieve faster adaptation.  
\end{remark}
\subsection{Stability Analysis}
Exploring the structures of the adaptive laws (\ref{gain 11})-(\ref{gain 22}), three possible scenarios are identified: Case (1): $\dot{\hat{\theta}}_i, \dot{\gamma}$ increase and $|| e_f || \geq \varpi$; Case (2): $\dot{\hat{\theta}}_i, \dot{\gamma}$ decrease and $|| e_f || \geq \varpi$; Case (3) $\dot{\hat{\theta}}_i=0, \dot{\gamma}=0$ when $|| e_f || < \varpi$ $\forall i=0,1,2$. 
\begin{theorem}\label{th 2}
The closed loop system (\ref{cl loop}) with control input (\ref{input})-(\ref{gain 22}) guarantees $e(t),e_f(t), \tilde{\theta}_i(t), \gamma (t)$ to be Uniformly Ultimately Bounded (UUB) where $\tilde{\theta}_i \triangleq (\hat{\theta}_i-{\theta}_i^{*})$, $i=0,1,2$. 
\end{theorem}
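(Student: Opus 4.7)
The plan is to combine the tracking-error Lyapunov function used in Theorem~\ref{th finite time } with quadratic penalties on the adaptation errors, working with
\begin{equation*}
V_L \triangleq \tfrac{1}{2}e_f^T M e_f + \tfrac{1}{2}e^T e + \sum_{i=0}^{2}\frac{1}{2\alpha_i}\tilde{\theta}_i^{2} + \frac{1}{2\alpha_3}\gamma^{2},
\end{equation*}
and to analyse $\dot{V}_L$ case by case along the three scenarios listed just before the theorem. Case (3) is disposed of immediately: with $\|e_f\|<\varpi$, Remark~2 already confines $\xi$ to the compact set $B_\varpi$, while the adaptive laws (\ref{gain 22}) freeze every $\hat{\theta}_i$ and $\gamma$, so no quantity of interest can escape.

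For Case (1) I would differentiate $V_L$ along the closed loop (\ref{cl loop}), invoke Property~1 to kill the Coriolis contribution, and bound $e_f^T\sigma$ using (\ref{bar gamma}). Substituting the first branches of (\ref{gain 11})-(\ref{gain 222}), the adaptation cross terms $\tilde{\theta}_i\|\xi\|^i\|e_f\|$ and $\gamma\|e_f\|$ arising from $\sum_i\alpha_i^{-1}\tilde{\theta}_i\dot{\hat{\theta}}_i$ and $\alpha_3^{-1}\gamma\dot{\gamma}$ exactly annihilate those produced by $e_f^T\sigma-\hat{\rho}\|e_f\|$, yielding the clean estimate $\dot{V}_L \leq -\lambda_{\min}(G)\|e_f\|^2-\lambda_{\min}(\Omega)\|e\|^2 \leq 0$. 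Hence $V_L$ is non-increasing whenever the gains are being driven upward.

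Case (2) is the principal obstacle. With the negative branch of (\ref{gain 11})-(\ref{gain 222}) active, the cross terms do not cancel but double, producing $-2\sum_i\tilde{\theta}_i\|\xi\|^i\|e_f\|$, which is positive whenever $\tilde{\theta}_i<0$. My plan is to exploit the uniform lower bound $\hat{\theta}_i\geq 0$ from (\ref{low bound}), so that $\tilde{\theta}_i\geq -\theta_i^{*}$, and then apply Young's inequality $2\theta_i^{*}\|\xi\|^i\|e_f\| \leq \varepsilon_i\|e_f\|^2 + (\theta_i^{*})^2\|\xi\|^{2i}/\varepsilon_i$. The $i=0$ piece contributes only a constant; the $i=1$ piece produces a $\|\xi\|^2$ term that I absorb into the negative quadratic part via $\|\xi\|^2 \leq (1+2\|\Omega\|^2)\|e\|^2 + 2\|e_f\|^2$; and the $i=2$ piece produces a $\|\xi\|^4$ term, which must be dominated by the extra contribution $-\varsigma\gamma\|\xi\|^4 \leq -\varsigma\beta\|\xi\|^4$ supplied by the $\gamma$-adaptation, with $\gamma \geq \beta > 0$ guaranteed by Remark~3. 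This is precisely the role of the auxiliary gain $\gamma$ and of the user-chosen scalars $\varsigma,\beta$, and it explains why (\ref{gain 222}) carries a $\|\xi\|^4$ factor rather than $\|e_f\|$.

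Putting the three cases together, I expect an inequality of the form $\dot{V}_L \leq -\mu V_L + c$ to hold outside a compact residual set whose size depends on $\varpi,\beta,\varsigma$ and on the unknown $\Theta^{*}$; the standard comparison lemma then delivers the UUB conclusion for $e,e_f,\tilde{\theta}_i$ and $\gamma$. The main difficulty is genuinely Case~(2): the $\|\xi\|^4$ decay supplied by $\gamma$ has to dominate the worst-case $\|\xi\|^4$ growth generated by the doubled cross term with $i=2$, which forces $\varsigma\beta$ to be chosen large enough and ultimately relies on the a priori strict positivity $\gamma \geq \beta$ secured in Remark~3.
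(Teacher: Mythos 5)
Your overall architecture coincides with the paper's: the same composite Lyapunov function, the same three-case decomposition, the exact cancellation of the adaptation cross terms against $e_f^T\sigma-\hat{\rho}\|e_f\|$ in Case (1), and the correct identification of Case (2) — with the doubled cross term $-2\sum_i\tilde{\theta}_i\|\xi\|^i\|e_f\|$ bounded via $\hat{\theta}_i\geq 0$ and opposed by $-\varsigma\gamma\|\xi\|^4\leq-\varsigma\beta\|\xi\|^4$ — as the crux. Your Case (3) is more casual than the paper's (which carries out the $\dot{V}_1$ computation and extracts an explicit ultimate bound $\|e_f\|\geq\sqrt{(\varrho_1\zeta+\varpi\vartheta)/z}$), but the spirit is the same.

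The genuine gap is in your treatment of the $i=2$ piece in Case (2). Applying Young's inequality $2\theta_2^{*}\|\xi\|^{2}\|e_f\|\leq\varepsilon_2\|e_f\|^{2}+(\theta_2^{*})^{2}\|\xi\|^{4}/\varepsilon_2$ manufactures a \emph{positive quartic} in $\|\xi\|$ that competes at the same degree as the stabilizing term $-\varsigma\beta\|\xi\|^{4}$. Negativity of $\dot{V}_L$ for large $\|\xi\|$ then hinges on $(\theta_2^{*})^{2}/\varepsilon_2<\varsigma\beta$ while simultaneously $\varepsilon_2<\lambda_{\min}(G)$, i.e., on a design condition involving the unknown $\theta_2^{*}$ — which is precisely what the adaptive-robust framework forbids you to assume. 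If that condition fails, the net $\|\xi\|^{4}$ coefficient is positive and your argument does not yield UUB at all. The paper avoids this by \emph{not} raising the degree: it bounds $\|e_f\|\leq\|\Gamma\|\|\xi\|$ so the doubled cross term stays a cubic, $2\|\Gamma\|\{\theta_0^{*}+\theta_1^{*}\|\xi\|+\theta_2^{*}\|\xi\|^{2}\}\|\xi\|$, and the resulting polynomial $f_p(\|\xi\|)=-\varsigma\beta\|\xi\|^{4}+(\text{cubic})+\varrho\zeta$ has negative leading coefficient of strictly higher degree, hence (by Descartes' rule of signs and the intermediate value theorem) exactly one positive root $\iota$ with $f_p\leq 0$ for $\|\xi\|\geq\iota$ — UUB follows for \emph{any} $\varsigma,\beta>0$, with no condition on the unknown parameters. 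Replace your Young step for $i=2$ by this degree-comparison argument (you may keep Young's inequality for $i=0,1$ or simply absorb those pieces into the cubic as well) and the proof closes.
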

\begin{proof}
%
The stability analysis of the overall system is carried out for the three cases mentioned above using the following common
Lyapunov function:
\begin{equation}
V_1=V+\sum_{i=0}^{2}\frac{1}{2\alpha}_i \tilde{\theta}_i ^2+ \frac{1}{2\alpha_{3}} \gamma^2.\label{lya el}
\end{equation}
\textbf{Case (1)}: $\dot{\hat{\theta}}_i,\dot{\gamma}$ increase $\forall i=0,1,2$ and $|| e_f || \geq \varpi$.\\
Note that $\sum_{i=0}^{2}\frac{1}{\alpha_i}\tilde{\theta}_i  \dot{\hat{\theta}}_i= Y(\xi)^T (\hat{\Theta}-{\Theta}^{*}) ||e_f||$. Then using (\ref{gain 11})-(\ref{gain 222}) and following the procedure in (\ref{vdot old 1}) one obtains 
\begin{align}
\dot{V}_1& \leq -e_f^TG e_f -e^T \Omega e+e_f^T(-\hat{\rho}({e_f}/{|| e_f ||}) +\sigma)\nonumber\\
& \qquad \qquad \qquad \qquad +Y(\xi)^T(\hat{\Theta}-{\Theta}^{*}) ||e_f||+\gamma ||e_f||\nonumber\\
& \leq -e_f^T G e_f -e^T \Omega e-(Y(\xi)^T \hat{\Theta}+\gamma) ||e_f||+ Y(\xi)^T {\Theta^{*}}  ||e_f||\nonumber\\
& \qquad \qquad \qquad \qquad \qquad +Y(\xi)^T(\hat{\Theta}-{\Theta}^{*}) ||e_f|| +\gamma ||e_f|| \nonumber\\
& \leq -\lambda_{\min}(G)||e_f||^2-\lambda_{\min}(\Omega)||e||^2 \leq 0. \label{vdot new1}
\end{align}
From (\ref{vdot new1}) it can be inferred that ${V}_1(t) \in \mathcal{L}_{\infty}$. Thus, the definition of $V_1$ yields $\gamma(t), \tilde{\theta}_i(t), e(t), e_f(t) \in \mathcal{L}_{\infty} \Rightarrow \xi(t),\hat{\theta}_i  \in \mathcal{L}_{\infty}$. 
Thus, for this case the closed-loop system remains stable. 

The gains $\gamma, \hat{\theta}_i$, $i=0,1,2$ remain bounded for Case (1), decrease for Case (2) and remain constant for Case (3). Hence,
$\exists$~$\bar{\gamma}, \bar{\theta}_i \in \mathbb{R}^{+}$ such that
\begin{align}
&~ \hat{\theta}_0 (t) \leq \bar{\theta}_0, \hat{\theta}_1 (t) \leq \bar{\theta}_1, \hat{\theta}_2 (t) \leq \bar{\theta}_2, \gamma (t) \leq \bar{\gamma} ~ \forall t\geq t_0.\label{up bound}
\end{align} 
\textbf{Case (2)}: $\dot{\hat{\theta}}_i, \dot{\gamma}$ decrease $\forall i=0,1,2$ and $|| e_f|| \geq \varpi$.\\
Using $\gamma \geq \beta$ (from (\ref{low bound})) and $||{e_f}|| \leq ||\Gamma|| ||\xi||$ yields: 
\begin{align}
\dot{V}_1& \leq -e_f^T G e_f -e^T \Omega e+ e_f^T(-\hat{\rho} ({e_f}/{|| e_f||}) +\sigma)\nonumber\\
& \qquad \qquad \qquad \qquad -Y(\xi)^T (\hat{\Theta}-{\Theta}^{*}) ||e_f||-\gamma \varsigma ||\xi||^4 \nonumber\\
& \leq -e_f^T G e_f -e^T \Omega e-(Y(\xi)^T \hat{\Theta}+\gamma) ||e_f||+ Y(\xi)^T {\Theta^{*}} ||e_f||\nonumber\\
& \qquad \qquad \qquad \qquad-\varsigma \beta ||\xi||^4-Y(\xi)^T(\hat{\Theta}-{\Theta}^{*}) || e_f || \nonumber\\
&\leq -\varsigma \beta ||\xi||^4 +2||\Gamma|| \lbrace \theta_0^{*}+ \theta_1^{*}||\xi|| + \theta_2^{*}||\xi||^2\rbrace  ||\xi|| \nonumber\\
& \qquad \qquad  -\lambda_{\min}(G)|| e_f ||^2-\lambda_{\min}(\Omega)||e||^2. \label{vdot new2}
\end{align}
Since $0 \leq \hat{\theta}_i (t) \leq \bar{\theta}_i$, $\beta \leq \gamma \leq \bar{\gamma}$ (from (\ref{low bound}) and  (\ref{up bound})), the definition of $V_1$ in (\ref{lya el}) yields
\begin{align}
V_1 \leq \varrho_M (||e_f||^2+||e||^2)+\zeta, \label{v bound 1}
\end{align}
where $\zeta \triangleq \sum_{i=0}^{2}\frac{1}{\alpha}_i  ({{\theta}_i^{*}}^2+\bar{\theta}_i^2)+ \frac{1}{\alpha_{3}} \bar{\gamma}^2$. Thus using (\ref{v bound 1})
\begin{align}
-\lambda_{\min}(G)||e_f||^2-\lambda_{\min}(\Omega)||e||^2  \leq -\varrho V_1 + \varrho \zeta. \label{v bound 11}
\end{align}
 Substitution of (\ref{v bound 11}) into (\ref{vdot new2}) yields
\begin{align}
\dot{V}_1& \leq -\varrho V_1+ f_{p}(||\xi||), \label{v bound 12}
\end{align}
{  where $f_{p}(||\xi||)=-\varsigma \beta ||\xi||^4 +2||\Gamma|| \lbrace \theta_0^{*}||\xi||+ \theta_1^{*}||\xi||^2 + \theta_2^{*}||\xi||^3 \rbrace+ \varrho \zeta$. Applying Descarte's rule of sign change \cite{Ref:des}, one can verify that $f_{p}(||\xi||)$ has maximum one positive real root. Further, it is to be noticed that $f_{p}(|| \xi ||=0)=\varrho \zeta \in \mathbb{R}^{+}$ and  $f_{p}(||\xi||) \rightarrow  -\infty$ as $||\xi|| \rightarrow \infty$. Hence, according to Bolzano's Intermediate Value Theorem \cite{Ref:bolz}, $f_{p}(||\xi||) $ will have at least one positive real root. So, combination of the Intermediate Value Theorem and Descarte's rule of sign change reveals that $f_{p}(||\xi||) $ has exactly one positive real root. Therefore, the nature of roots of $f_{p}(||\xi||) $  will be either (i) one positive real root and three negative real roots or (ii) one positive real root, one negative real root and a pair of complex conjugate roots. 

Let $\iota \in \mathbb{R}^{+}$ be the positive real root of $f_{p}(||\xi||)$. Figure \ref{fig:root} depicts the nature of $f_{p}(||\xi||)$ depending on the various combination of roots \cite{Ref:poly book}. It is to be noted that the actual graph and values of the roots of the polynomial $f_{p}(||\xi||)$ would depend on the values of the coefficients of $f_{p}(||\xi||)$. However, $\theta_0^{*}, \theta_1^{*}, \theta_2^{*}$ and $\zeta$ are unknown here. Nevertheless, to study the stability of the system, it is sufficient to analyse the nature of $f_{p}(||\xi||)$ (i.e. the instances when $f_{p}(||\xi||)>0$ or $f_{p}(||\xi||) \leq 0$) rather than determining the values of the roots. The nature of any polynomial can be understood from the occurrence of its real roots \cite{Ref:poly book}. Moreover, the leading coefficient of $f_{p}(||\xi||)$ (the coefficient of the highest degree term $||\xi||^4$) is negative (as $\varsigma, \beta \in \mathbb{R}^{+}$). As a matter of fact, one can notice from Fig. \ref{fig:root} that $f_{p}(||\xi||) \leq 0$ when  $||\xi|| \geq \iota$. Hence, the overall system would be UUB for this case. From the point of view of controller design, it is important to reduce $\iota$ such that better tracking accuracy can be achieved and this can be obtained by increasing $\varsigma$.

\begin{figure}[h]
\begin{center}
\includegraphics[width=3.5 in,height=2.3 in]{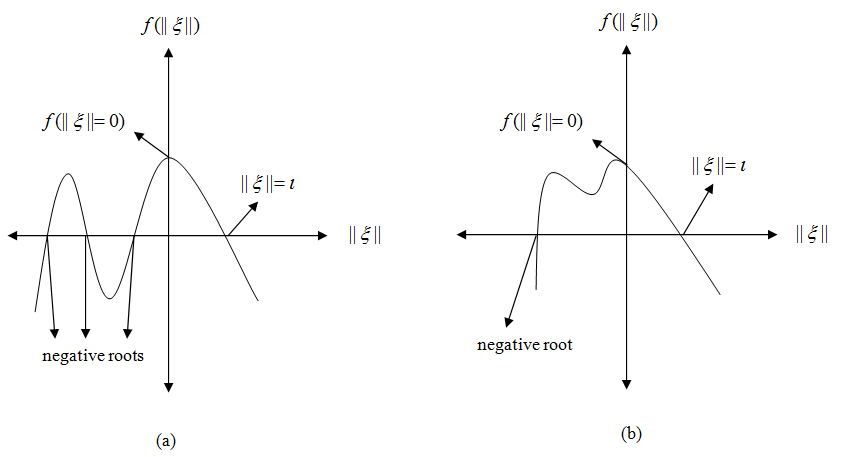}
\caption{ {  $f_{p}(||\xi||)$ with (a) one positive real root and three negative real roots (b) one positive real root, one negative real root and a pair of complex conjugate roots.}}\label{fig:root}
\end{center}
\end{figure}
}
Hence, the overall system would be UUB \cite{Ref:khalil} for this case. 

\textbf{Case 3}: $\dot{\hat{\theta}}_i=0, \dot{\gamma}=0$, $\forall i=0,1,2$ when $|| e_f || < \varpi$. \\
Similar to Case (1), $\dot{V}_1$ can be simplified for this case as
\begin{align}
\dot{V}_1& \leq -e_f^TG e_f -e^T \Omega e+e_f^T(-\hat{\rho}({e_f}/{\varpi}) +\sigma)\nonumber\\
& \leq -e_f^TG e_f -e^T \Omega e-(Y(\xi)^T \hat{\Theta} + \gamma)({||e_f||^2}/{\varpi})\nonumber\\
& \qquad \qquad + Y(\xi)^T {\Theta^{*}} ||e_f|| \nonumber\\
& \leq -\lambda_{\min}(G)||e_f||^2-\lambda_{\min}(\Omega)||e||^2 + Y(\xi)^T {\Theta^{*}}  ||e_f||. \label{vdot new1 sup}
\end{align}
For $||e_f|| < \varpi$, the system remains bounded inside the ball $B_{\varpi} \triangleq \lbrace||\Gamma \xi|| < \varpi \rbrace$ as $e_f=\Gamma \xi$. This implies that $Y(||\xi||) \in \mathcal{L}_{\infty}$. Hence, for $||e_f|| < \varpi$, $\exists \vartheta \in \mathbb{R}^{+}$ such that 
\begin{equation}
||  Y(\xi)^T {\Theta^{*}} || || e_f || \leq \varpi \vartheta. \label{sup 31}
\end{equation}
Let us define a scalar $z$ as $0 < z < \lambda_{\min}(G)$. Then using (\ref{v bound 1}) and (\ref{sup 31}),  (\ref{vdot new1 sup}) is modified as
\begin{align}
\dot{V}_1& \leq -\lambda_{\min}(\Omega)||e||^2- \lbrace \lambda_{\min}(G)-z \rbrace || e_f ||^2 - z|| e_f ||^2+ \varpi \vartheta\nonumber\\
       & \leq  - \varrho_{1} V_1- z|| e_f ||^2+ \varrho_{1} \zeta +  \varpi \vartheta, \label{sup 32}
\end{align}
where $\varrho_{1} \triangleq \lbrace \min \lbrace ( \lambda_{\min}(G)-z), \lambda_{\min}(\Omega)\rbrace \rbrace / \varrho_M$. Hence, the system would be UUB when 
\begin{equation}
|| e_f ||=||\Gamma \xi || \geq \sqrt{({\varrho_{1} \zeta +  \varpi \vartheta})/{z}}.
\end{equation}
Since the closed loop system remains UUB for both the cases $||e_f || \geq \varpi$ and $||e_f|| <\varpi$ using the common Lyapunov function (\ref{lya el}), the overall closed loop system also remains UUB \cite{Ref:liber}.
\end{proof}
\begin{remark}
It is noteworthy that the condition (\ref{low bound}) is necessary for stability of the system. Moreover, high values of $\varsigma$ helps to reduce $\iota$ which consequently can improve controller accuracy. However, one needs to be careful that too high value of $\varsigma$ may excite the condition $\gamma \leq \beta$ leading to the increment in all the gains $\gamma, \hat{\theta}_i,$ $i=0,1,2$. Further, the scalar terms $z, \vartheta, \psi,\mu_2,\zeta,\bar{\theta}_i$ and $\bar{\gamma}$ are only used for the purpose of analysis and not used to design control law.
\end{remark}
\begin{remark} \label{poly}
The importance of the auxiliary gain $\gamma$ can be realized from Theorems 1 and 2. It can be observed from (\ref{t}) that $t_1$ gets reduced due to the presence of $\alpha_3$ (contributed by $\dot{\gamma}>0$) which leads to faster adaptation. Moreover, the negative fourth degree term $-\varsigma \beta ||\xi||^4$ in $f_p(||\xi||)$ (contributed by $\dot{\gamma}<0$) ensures system stability for Case (2) by making $f_p(||\xi||) \leq 0$ for $|| \xi || \geq \iota $. This also indicates the reason for selecting $\beta>0$ while lower bounds of other gains $\hat{\theta}_i$ $i=0,1,2$ are selected as zero. 
\end{remark}

\textbf{Special Case:} 
The quadratic term $||\xi||^2$ in uncertainty bound (\ref{sigma1}) is contributed by the matrix $C(q,\dot{q})$ (through property 4 in (\ref{i})). EL systems such as robotic manipulator, underwater vehicles, ship dynamics etc. includes $C(q,\dot{q})$. However, there also exist second order EL system (e.g. reduced order WMR system) which does not have the term $C(q,\dot{q})$. For such systems, the following LIP structure would hold:
\begin{equation}
||\sigma ||\leq \theta_0^{*}+\theta_1^{*}||\xi|| \triangleq Y(\xi)^T \Theta^{*}, \label{sigma11} 
\end{equation}
where $Y(\xi)=[1 \quad ||\xi|| ]^T$ and $\Theta^{*}=[\theta_0^{*} \quad \theta_1^{*}]^T$. Hence, following the switching gain laws (\ref{rho})-(\ref{gain 22}), the control laws for uncertainty structure (\ref{sigma11}) are modified as 
\begin{align} 
& \hat{\rho}=\hat{\theta}_0+\hat{\theta}_1||\xi||+ \gamma \triangleq Y(\xi)^T \hat{\Theta}+\gamma, \label{rho1}\\
(i) &~ \text{for}~|| e_f|| \geq \varpi \nonumber \\
& \dot{\hat{\theta}}_i=
  \begin{cases}
   {\alpha}_i ||\xi||^i ||e_f|| ~~\text{if}~\lbrace e^T \dot{e} >0 \rbrace \cup \lbrace \bigcup_{i=0}^{1}\hat{\theta}_i \leq 0 \rbrace \\
  \qquad \qquad \qquad \qquad \qquad \quad \cup \lbrace \gamma \leq \beta \rbrace \\
   -{\alpha}_i ||\xi||^i ||e_f||~~\text{otherwise},
   \end{cases} \label{gain 13} \\
 & \dot{\gamma}= 
 \begin{cases}
 \alpha_{3} ||e_f||  ~~\text{if}~\lbrace e^T \dot{e} >0 \rbrace \cup \lbrace \bigcup_{i=0}^{1}\hat{\theta}_i \leq 0 \rbrace \\
 \qquad \qquad \qquad \qquad \qquad \quad \cup \lbrace \gamma \leq \beta \rbrace \\
 -\varsigma \alpha_{3} ||\xi||^{3} ~~\text{otherwise},
 \end{cases} \label{gain 223} \\
 (ii) &~ \text{for}~|| e_f || < \varpi \nonumber \\
 & \dot{\hat{\theta}}_i=0, \dot{\gamma}=0, \label{gain 23} \\
 \text{with}~& \hat{\theta}_i(t_0) > 0, i=0,1,~ \gamma (t_0)> \beta. \label{init 1} 
\end{align}
System stability employing (\ref{rho1})-(\ref{gain 23}) can be analysed exactly like Theorem \ref{th 2} using the following Lyapunov function: 
\begin{equation}
V_1=V+\sum_{i=0}^{1}({1}/{2\alpha}_i) \tilde{\theta}_i ^2+ ({1}/{2\alpha_{3}})\gamma ^2.\label{lya el 1}
\end{equation}
One can verify that the cubic polynomial $2||\Gamma|| \lbrace \theta_0^{*}+ \theta_1^{*}||\xi|| + \theta_2^{*}||\xi||^2\rbrace ||\xi||$ in $f_p (||\xi||)$ of Case (2) would be modified as quadratic polynomial $2||\Gamma|| \lbrace \theta_0^{*}+ \theta_1^{*}||\xi|| \rbrace ||\xi||$ using (\ref{sigma11}) and (\ref{lya el 1}). Hence, following the argument in Remark \ref{poly}, it can be noticed that a cubic term $ -\varsigma \alpha_{3} ||\xi||^{3}$ is selected in the adaptive law (\ref{gain 23}) for system stability.

Thus, with EL system (\ref{sys}), only two structures are possible for $||\sigma||$: (i) $Y(\xi)=[1 ~ ||\xi||~||\xi||^2 ]^T$, $\Theta^{*}=[\theta_0^{*} ~ \theta_1^{*}~ \theta_2^{*}]^T$ and (ii) $Y(\xi)=[1 ~ ||\xi|| ]^T$, $\Theta^{*}=[\theta_0^{*} ~\theta_1^{*}]^T$. Both these situations are covered here. 
For better inference, the ASRC algorithm is summarized in Table \ref{table algo} for various system structures.
\begin{table}[!h]
	\renewcommand{\arraystretch}{1.3}
	\caption{ASRC Algorithm for Various System Structures}
	\label{table algo}
	\centering
	\begin{tabular}{c c c c}
		\hline
		\hline
		\multicolumn{2}{c}{System Structure} & LIP structure of $|| \sigma || $ & Control law   \\ \cline{2-4}
		\hline
		\multirow{2}{*}{(\ref{sys})} & $C(q,\dot{q}) \neq 0$ & (\ref{sigma1})  & (\ref{input}) - (\ref{init})\\ 
		& $C(q,\dot{q})=0$ & (\ref{sigma11}) &  (\ref{input}), (\ref{rho1}) - (\ref{init 1})\\ 
		\hline
		\hline
	\end{tabular}
\end{table}

\textbf{Comparison with existing Adaptive-Robust law}: To gain further insight into the advantage of the proposed adaptive law, the following adaptive law of ASMC \cite{Ref:21}-\cite{Ref:22} for switching gain $K$ is provided:
\begin{align}
& \dot{K}(t)=
  \begin{cases}
   \bar{K}||s||\text{sgn}(||s||-\epsilon), ~\text{if}~K > \beta \\
  \beta  \qquad \qquad \qquad \qquad ~\text{if}~K \leq \beta,
   \end{cases} \label{asmc 1}
\end{align} 
where $\epsilon, \bar{K} \in \mathbb{R}^{+}$ are user defined scalars and $s$ is the sliding surface. It can be observed from (\ref{asmc 1}) that when $||s|| \geq \epsilon$ the switching gain $K$ increases monotonically even if error trajectories move close to $||s|| = 0$. This gives rise to the overestimation problem of switching gain. Again, even if $K$ is sufficient to keep $||s||$ within $\epsilon$, it decreases monotonically when $||s|| < \epsilon $. So, at certain time, $K$ would become insufficient and error will increase again. However, $K$ will not increase (rather it keeps on decreasing) until $||s|| >\epsilon$, which creates underestimation problem. Low (resp. High) value of $\epsilon$ may force $K$ to increase (resp. decrease) for longer duration when $||s|| \geq \epsilon$ (resp. $||s|| < \epsilon$) resulting in escalation of the overestimation (resp. underestimation) problem of ASMC.

Whereas, ASRC allows its gains to decrease when error trajectories move towards $||e||=0$ and $||e_f|| \geq \varpi$ (overcoming overestimation problem) and keeps the gains unchanged when they are sufficient to keep the error within the ball $B_\varpi$ (overcoming underestimation problem). Since the overestimation-underestimation problems are alleviated by ASRC for any $\varpi$,  one can in fact reduce $\varpi$ for better tracking accuracy as long as chattering does not occur. 

\section{Application: Nonholonomic WMR}
\begin{figure}[h]
\begin{center}
\includegraphics[width=3 in,height=2.5 in]{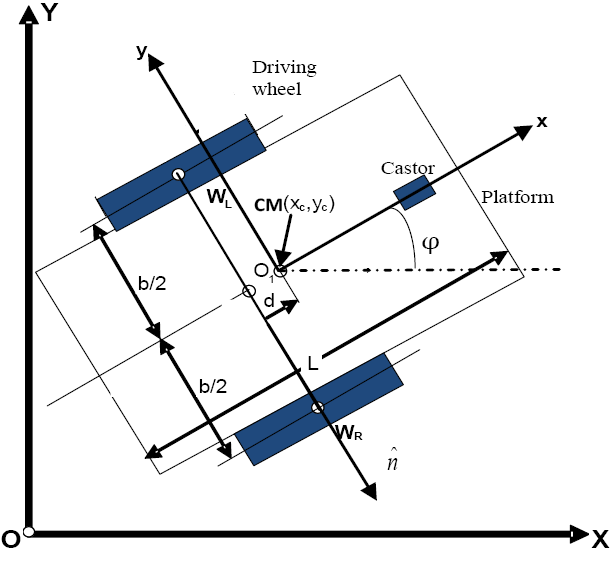}
\caption{Schematic of a WMR.}\label{fig:wmr}
\end{center}
\end{figure}
Nonholonomic WMR, which has vast applications in transportation, planetary exploration, surveillance, security, human-machine-interfaces etc., provides a unique platform to test the proposed control law.  
Hence, the performance of the proposed ASRC is verified using a commercially available 'PIONEER 3' WMR in comparison to Adaptive Sliding Mode Control (ASMC) \cite{Ref:21}-\cite{Ref:22}. The ASMC law is detailed in \cite{Ref:21}-\cite{Ref:22} while it follows the adaptive law (\ref{asmc 1}). 

The Euler-Lagrangian formulation of a nonholonomic WMR (Fig. \ref{fig:wmr}) is given as \cite{Ref:coelho}: 
\begin{align}
&{M}(q)\ddot{q}+ {C}(q,\dot{q})\dot{q}=L\tau-{A^{*}}^T \lambda^{*}, \label{sys w1}
\end{align}
\begin{align}
\text{where}~&{M}=\begin{bmatrix}
m & 0 & mdsin\varphi & 0 & 0 \\ 
0 & m & -mdcos\varphi & 0 & 0\\ 
mdsin\varphi & -mdcos\varphi & \bar{I} & 0 & 0 \\ 
0 & 0 & 0 & I_w & 0 \\ 
0 & 0 & 0 & 0 & I_w
\end{bmatrix}, \nonumber\\
L=&\begin{bmatrix}
0 &0 \\ 
0 &0 \\ 
0 & 0\\ 
1 & 0\\ 
0 & 1
\end{bmatrix},
{C}(q,\dot{q})\dot{q}=\begin{bmatrix}
md \dot{\varphi}^2 cos \varphi\\ 
md \dot{\varphi}^2 sin \varphi\\ 
0\\ 
0\\ 
0\\ 
\end{bmatrix}, \tau=\begin{bmatrix}
\tau_r\\ 
\tau_l
\end{bmatrix}. \nonumber
\end{align}
Here $q \in\mathbb{R}^5=\lbrace x_c,y_c, \varphi, \theta_r, \theta_l \rbrace$ is the generalized coordinate vector of the system; $(x_c,y_c)$ are the coordinates of the center of mass (CM) of the system and $\varphi$ is the heading angle; $(\theta_r, \theta_l)$ and $(\tau_r,\tau_l)$  are rotation and torque inputs of the right and left wheels respectively; $m,\bar{I}, I_w,r_w$ and $b$ represent the system mass, system inertia, wheel inertia, wheel radius and robot width respectively; $d$ is the distance to the CM from the center of the line joining the two wheel axis; ${A^{*}}$ and $\lambda^{*}$ represent the constraint matrix and vector of constraint forces (Lagrange multipliers) respectively. Expressions of ${A^{*}}$ and $ \bar{I}$ can be found in \cite{Ref:coelho}. 

It is noteworthy that the system (\ref{sys w1}) has only two control inputs although having five generalized coordinates. In fact, for WMR, one can only directly control wheel positions $(\theta_r, \theta_l)$ rather than $(x_c,y_c, \varphi)$. So, the system dynamics can be represented as a combination of a reduced order dynamics and kinematic model for efficient controller design as in \cite{Ref:sir}, \cite{Ref:coelho}:
\begin{align}
&M_R\ddot{q}_R+C_R \dot{q}_R=\tau, \label{sys w4}
\end{align}
\begin{align}
&\dot{q}=\underbrace{\begin{bmatrix}
\frac{r_w}{b}\left ( \frac{b}{2} cos(\varphi)-d sin(\varphi)\right ) & \frac{r_w}{b}\left ( \frac{b}{2} cos(\varphi)+d sin(\varphi)\right )\\ 
\frac{r_w}{b}\left ( \frac{b}{2} sin(\varphi)+d cos(\varphi)\right ) & \frac{r_w}{b}\left ( \frac{b}{2} sin(\varphi)-d cos(\varphi)\right )\\ 
r_w/b & -r_w/b\\
1 & 0\\
0 & 1
\end{bmatrix}}_{S(q)} \dot{q}_R,  \label{S}
\end{align}
\begin{align}
&\text{where}~ M_R=S^TMS=\begin{bmatrix}
k_1 & k_2 \\ 
k_2 & k_1
\end{bmatrix},\label{new m}\\
&k_1= I_w + \lbrace \bar{I}+m (b^2/4-d^2) \rbrace ({r_w}^2/b^2), \nonumber \\
&k_2= \lbrace m (b^2/4+d^2)-\bar{I} \rbrace ({r_w}^2/b^2),\nonumber\\
&C_R=S^T(M \dot{S}+C S)=\begin{bmatrix}
0  & 0 \\ 
0 & 0
\end{bmatrix}, q_R=[\theta_r ~ \theta_l]^T. \label{express}
\end{align}
As WMR moves on ground, the gravity vector $g(q)$ and the potential function would certainly be zero which implies that $M_R, C_R$ satisfies the Properties 1 and 3 \cite{Ref:sir}, \cite{Ref:d'andrea}. The main implication of system Property 1 is to hold $e_f^T(\dot{M}-2C)e_f=0$ and this can be easily verified from (\ref{new m})-(\ref{express}). 
The WMR dynamics (\ref{sys w1}) is based on rolling without slipping condition and hence the term $f(\dot{q}_R)$ is omitted. However, in practical circumstances a WMR is always subjected to uncertainties like friction, slip, skid, external disturbance etc. Hence, incorporating (\ref{express}), the system dynamics (\ref{sys w4}) is modified as
\begin{align}
&M_R\ddot{q}_R+ f(\dot{{q}}_R) + d_s=\tau, \label{sys w2}
\end{align}
where $f(\dot{q}_R) $ and $d_s$ are considered to be the unmodelled dynamics and disturbance respectively. { Often, simple controllers such as open loop control (OLC), PID controller are used in practice for their simplicity. However, the various works such as [26] and the references of [26] (e.g., reference [15] in [26]) have discussed the need to formulate advanced robust tracking controllers for WMR compared to conventional open loop control or PID control to improve tracking accuracy, specifically in the face of unmodeled dynamics and time-varying uncertainty. During the experiment, the payload of the system may be varied due to addition or removal of sensors according to the application requirement; this causes variations in overall system mass, center of mass, inertia etc. Further, the original systems dynamics (\ref{sys w4}) is formed based on the pure rolling assumption. This assumption is not satisfied in practice due to the friction effect between wheel and surface; this is denoted by $f(\dot{q}_R)$ in the WMR dynamics (57). Apart from this, there are also effects of external disturbances $d_s$. 
However, the evaluation of switching gain for robust controller like \cite{Ref:self} requires prior knowledge of the bound of the uncertainties. This implies the designer needs to have the knowledge of $\theta_0^{*},\theta_1^{*}$ for WMR (due to the absence of Coriolis term in the WMR dynamics). This further means that the designer should have the knowledge of the parametric variations in systems as well as bound of $f(\dot{q}_R)$ and $d_s$. This demands tedious modelling job which is also not always accurate.

The benefit, applicability and efficacy of the proposed ASRC can be realized in this context. ASRC does not require any knowledge of the systems dynamics terms $M_R$, $f(\dot{q}_R)$ and $d_s$ of WMR system (57) (and for the matter of fact any EL system representing dynamics (3)). Further, while implementing the control law, it does not need any knowledge of $\theta_0^{*},\theta_1^{*}$ and rather adapts these terms by the adaptive law (46)-(49) (since Coriolis component is zero, the ASRC algorithm applied to WMR is based on the control laws (\ref{input}), (\ref{rho1})-(\ref{init 1})). Hence, ASRC eliminates any effort to model the system as well as avoids any need to characterize the time-varying uncertain parameters and disturbances. }
 It is to be noted that $S(q)$ is only used for coordinate transformation and WMR pose $(x_c, y_c, \varphi)$ representation and, not for control law design.

Hence, the objective is to apply ASRC and ASMC to the reduced order WMR system (\ref{sys w2}) to track a desired $q_R^d(t)$ which in effect track a desired ${q}^d(t)$ through (\ref{S}). To illustrate the fact: one can direct a WMR to move in a specified circular path by designing two suitable different and fixed wheel velocities or in a Lawn-Mower path by applying approximated square-wave velocity profile to the wheels \cite{Ref:self}.

\subsection{Experimental Scenario} 
The WMR is directed to follow a circular path using the following desired trajectories: 
\begin{align*}
\theta_r^d=(4t+ \pi/10) rad, ~ \theta_l^d=(3t+\pi/10) rad. 
\end{align*}
PIONEER 3 uses two quadrature incremental encoders ($500$ ppr) and always starts from $\theta_r(t_0)=\theta_l(t_0)=0$ and the initial wheel position error $(\pi/10, \pi/10)$ rad helps to realize the error convergence ability of the controllers. The desired WMR pose $(x_c^d, y_c^d, \varphi^d)$ and actual WMR pose $(x_c, y_c, \varphi)$ can be determined from (\ref{S}) using $(\dot{\theta}_r^d$, $\dot{\theta}_l^d)$ and $(\dot{\theta}_r$, $\dot{\theta}_l)$ (obtained from encoder) respectively with $r_w=0.097 m, b=0.381 m, d=0.02m$ (supplied by the manufacturer). The control laws for both ASRC and ASMC are written in VC++ environment. Considering the hardware response time, the sampling interval is selected as $20ms$ for all the controllers. Further, to create a dynamic payload variation, a $3.5 kg$ payload is added (kept for $5$ sec) and removed (for $5$ sec) periodically on the robotic platform at different places.  

The controller parameters for ASRC are selected as: $G=\Omega=I$, $\varpi=0.5$, $\hat{\theta}_i(t_0)=\gamma(t_0)=20, \alpha_i=\alpha_3=10$ $\forall i=0,1$, $\beta=0.1,\varsigma=10$. Further, the controller parameters for ASMC are selected as $s= e_f$, $\bar{K}=10,K(t_0)=35,\epsilon=0.5$.

\subsection{Experimental Results and Comparison} 
 The path tracking performance of ASRC is depicted in Fig. \ref{fig:1} while following the desired circular path. The tracking performance comparison of ASRC with ASMC is illustrated in Fig. \ref{fig:2} in terms of $E_p$ ( defined by the Euclidean distance in $x_c,y_c$ error) and $E_{\tau}$ (defined as $||\tau||$). ASMC framework is built on the assumption that uncertainties are upper bounded by an unknown constant (i.e. $\theta_1^{*}=\theta_2^{*}=0$ for general EL systems and $\theta_1^{*}=0$ for this particular WMR based experiment as $C_R=0$). This assumption is restrictive in nature for EL systems and the switching gain is thus insufficient to provide the necessary robustness. As a matter of fact, ASRC provides better tracking accuracy over ASMC. 
 
 To evaluate the benefit of the proposed adaptive-robust law, the evaluation of switching gain for ASMC and ASRC are provided in Fig. \ref{fig:3} and \ref{fig:4} respectively. 
Figure \ref{fig:3} reveals that $K$, the switching gain of ASMC, increases even when $||s||$ approaches towards $||s||=0$ during the time $t$$=$$0$$-$$1.2$ sec. This is due to the fact that $K$ does not decrease unless $||s|| < \epsilon$ and this gives rise to the \textit{overestimation} problem. On the other hand for ASRC, it can be seen from Fig. \ref{fig:4} that all the gains (i.e. $\gamma, \hat{\theta}_0, \hat{\theta}_1$) decrease when $||e_R||$ ($e_R=q_R-q^d_R$) decreases during $t$$=$$0$$-$$1$ sec when $||e_{f_R}|| \geq \varpi$ ($||e_{f_R}||=\dot{e}_R+ \Omega e_R$). So, ASRC overcomes the overestimation problem which is encountered in ASMC. 
Further, $K$ decreases monotonically for time durations $t$$=$$1.2$$-$$38.5$ sec, when $||s|| < \epsilon$. This monotonic decrement makes $K$ insufficient to tackle uncertainties at certain time creating \textit{underestimation} problem. As a result, $||s||$ starts increasing again for $t> 38.5$ sec leading to poor tracking accuracy and $K$ increases again when $||s|| \geq \epsilon$. Gains of ASRC, on the contrary, stay unchanged for $t > 1$~sec when the gains are sufficient to keep $||e_{f_R}|| < \varpi $ avoiding any underestimation problem. {Further the evaluation $\hat{\rho}$ of ASRC is shown in Fig. \ref{fig:5}. It is to be noted that $\hat{\rho}=\hat{\theta}_0+\hat{\theta}_1||\xi||+ \gamma$ for WMR. Hence, though $\hat{\theta}_0, \hat{\theta}_1$ and $\gamma$ remain constant for $t > 1$ sec, $\hat{\rho}$ is not constant for $t > 1$ due to the presence of $\xi$. This is shown by magnifying $\hat{\rho}$ for time durations $t=1-5$ sec and $t=20-25$ sec in Fig. \ref{fig:5}.} While reduction of $\epsilon$ would cause more overestimation problem for ASMC, Table \ref{table 3} shows how tracking accuracy of ASRC improves with reduced $\varpi$ (other control parameters are kept unchanged) while chattering is not observed in control input.

	
\begin{figure}[!t]
      \centering
      \includegraphics[width=2.5 in,height=2.0 in]{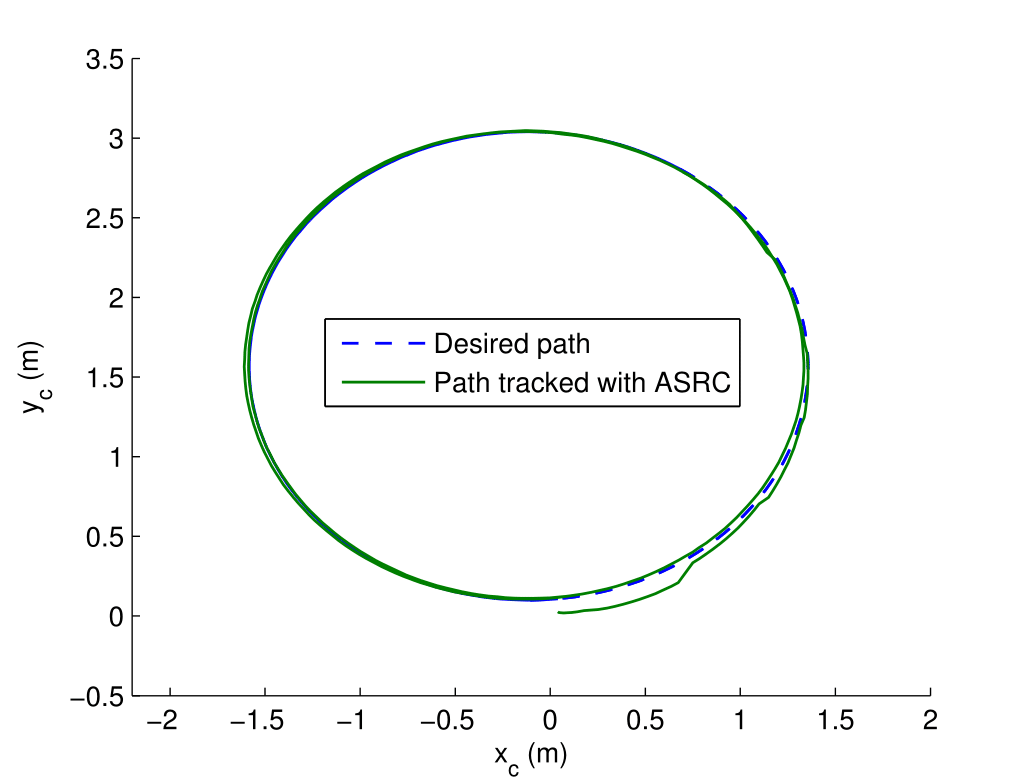}
      \caption{Circular path tracking with ASRC.}
      \label{fig:1}
   \end{figure}

\begin{figure}[!t]
      \centering
      \includegraphics[width=3.5 in,height=2.8 in]{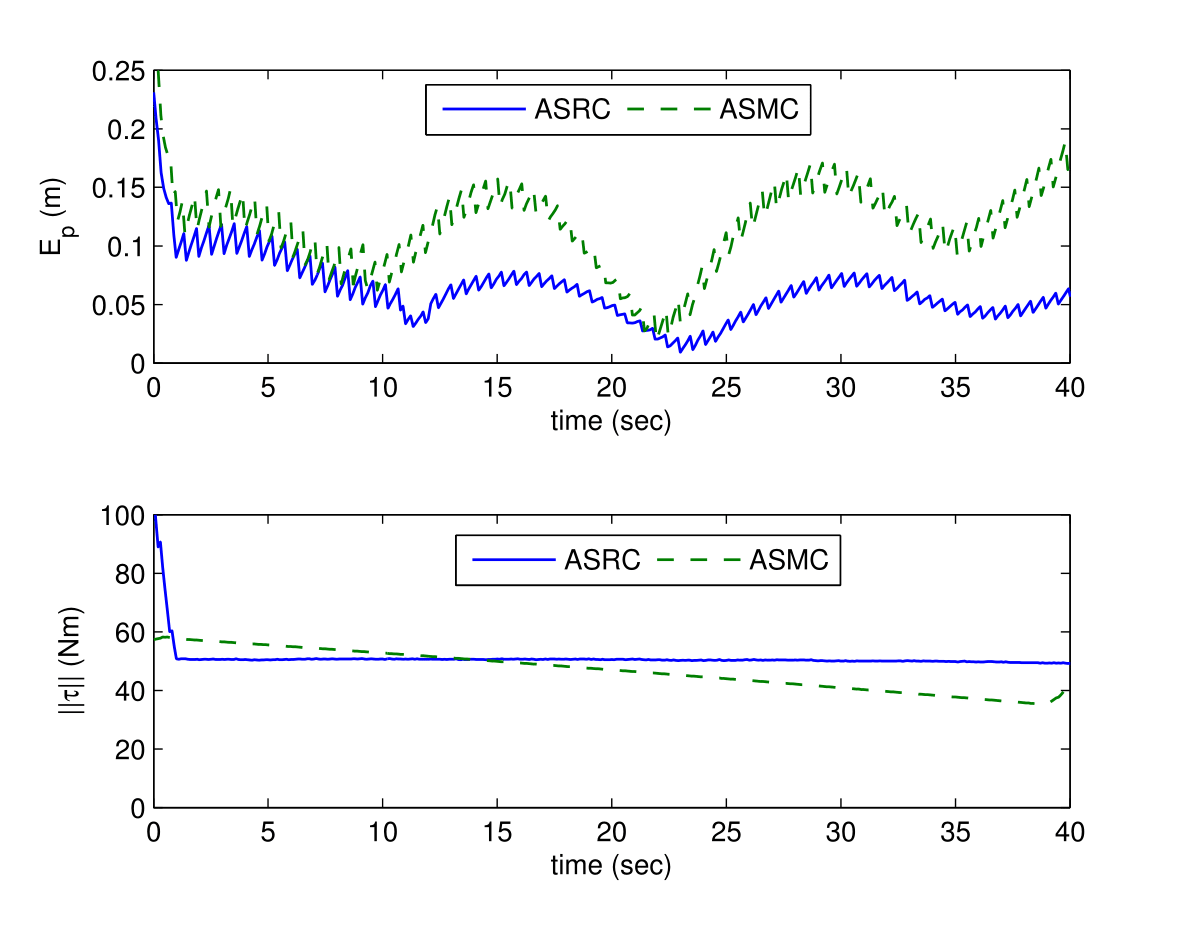} 
      \caption{Performance comparison between ASRC and ASMC.}
      \label{fig:2}
   \end{figure}

\begin{figure}[h!]
      \centering
     \includegraphics[width=3.5 in,height=2.6 in]{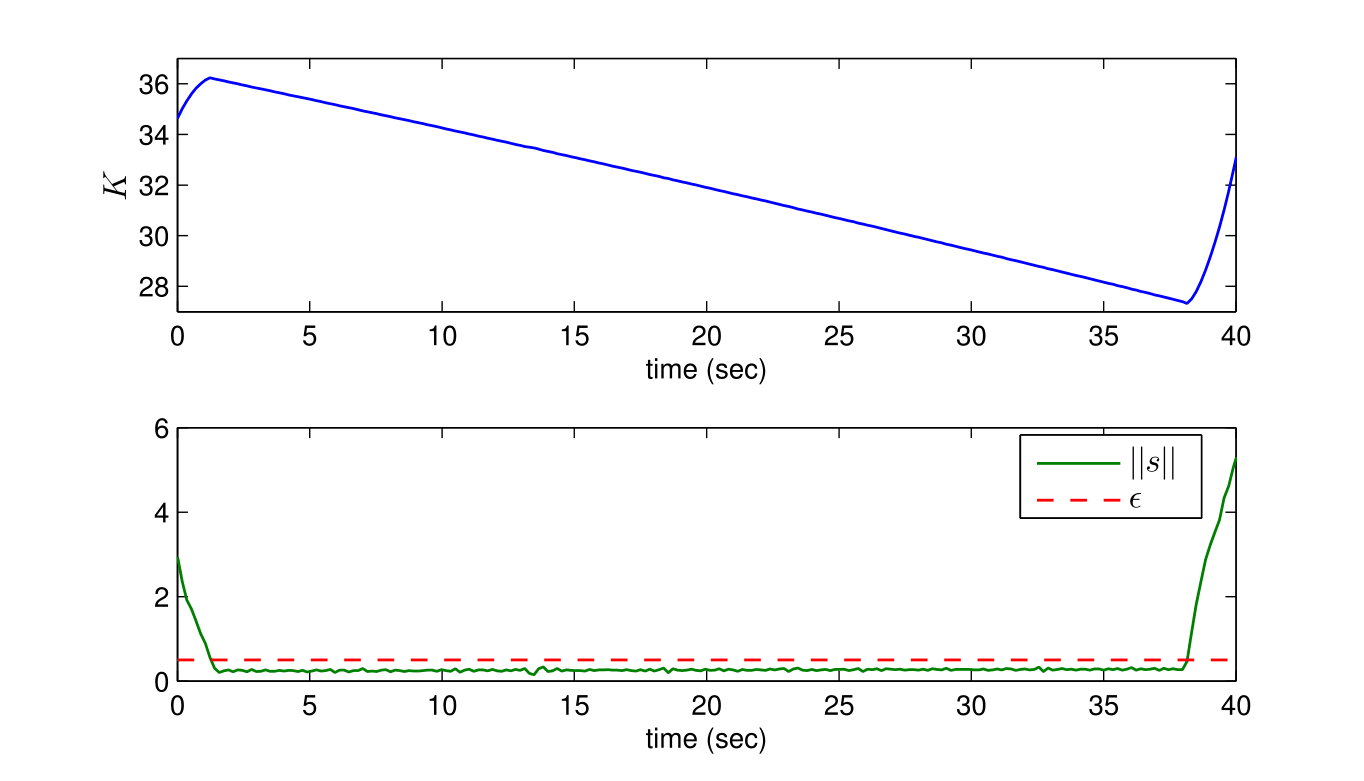} 
      \caption{Switching gain evaluation of ASMC.}
      \label{fig:3}
   \end{figure}

\begin{figure}[h!]
      \centering
       \includegraphics[width=3.5 in,height=3.0 in]{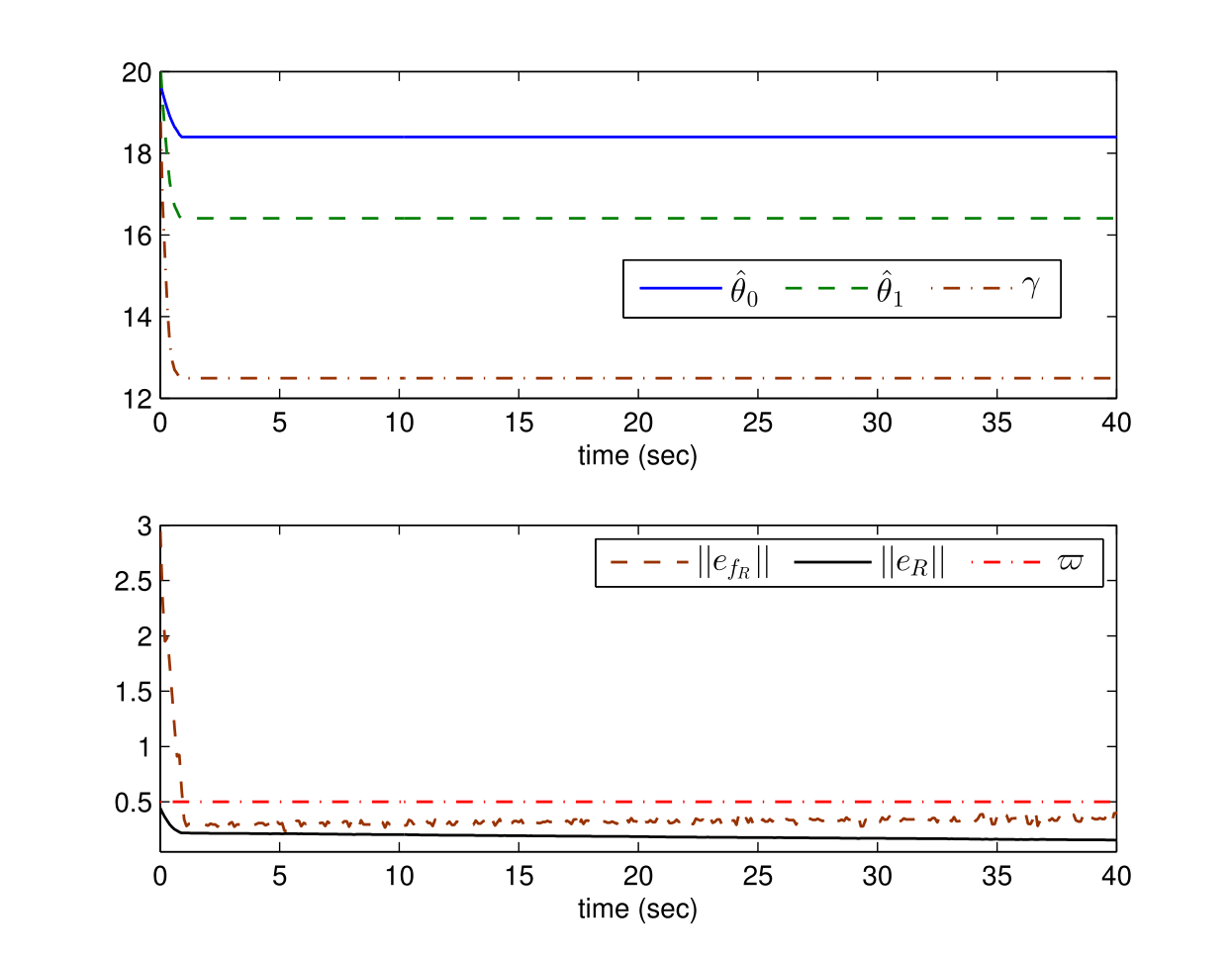}  
      \caption{Switching gain evaluation of ASRC.}
      \label{fig:4}
   \end{figure}
   
   \begin{figure}[!t]
      \centering
       \includegraphics[width=3.5 in,height=2.5 in]{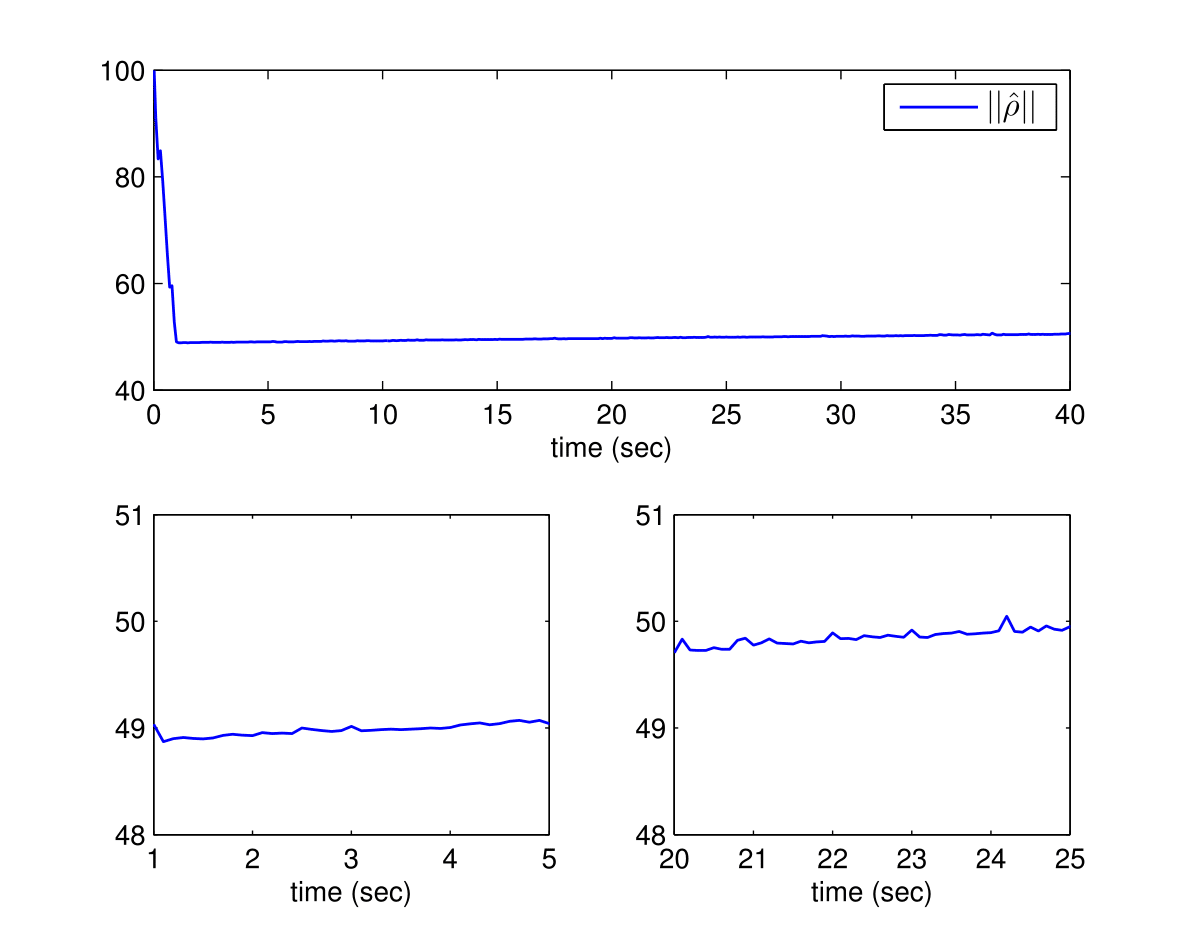}  
      \caption{ { Evaluation of $\hat{\rho}$ of ASRC.}}
      \label{fig:5}
   \end{figure}
   
      \begin{table}[!h]
\renewcommand{\arraystretch}{1.2}
\caption{Performance of ASRC for Various $\varpi$}
\label{table 3}
\begin{center}
\begin{tabular}{c c c c}
\hline
\hline
& $\varpi=0.5$  & $\varpi=0.3$  & $\varpi=0.1$   \\
\hline
RMS (root mean squared) $E_p$ (m) & 0.053 & 0.0421 & 0.0362 \\
RMS $||\tau||$ (Nm) & 89.78 & 75.46 & 67.13 \\
\hline
\hline
\end{tabular}
\end{center}
\end{table}
   
{ It can be noticed from Fig. \ref{fig:5} that initials gains are high enough such that $||e_R||$ decreases from the beginning and so do the gains $\gamma, \hat{\gamma}_i$. Hence, it would be prudent to verify the capability of ASRC in alleviating the overestimation-underestimation problem while starting with relatively low gains. Therefore, the same experiment for ASRC is repeated with much lower initial value of the gains $\hat{\theta}_i$, $i=0,1$ and $\gamma$. Previously, the initial values were $\gamma(t_0)=\hat{\theta}_i(t_0)=20$. This time they are selected to be $\gamma(t_0)=\hat{\theta}_i(t_0)=10$. The tracking performance and evaluation of the switching gain for the later case is shown in Fig. \ref{fig:per_com_new} and \ref{fig:asrc_gain_low_init} respectively. Here ASRC1 denotes when $\gamma(t_0)=\hat{\theta}_i(t_0)=20$ and ASRC2 denotes when $\gamma(t_0)=\hat{\theta}_i(t_0)=10$.

\begin{figure}[!h]
      \centering
      \includegraphics[width=3.5 in,height=2.1 in]{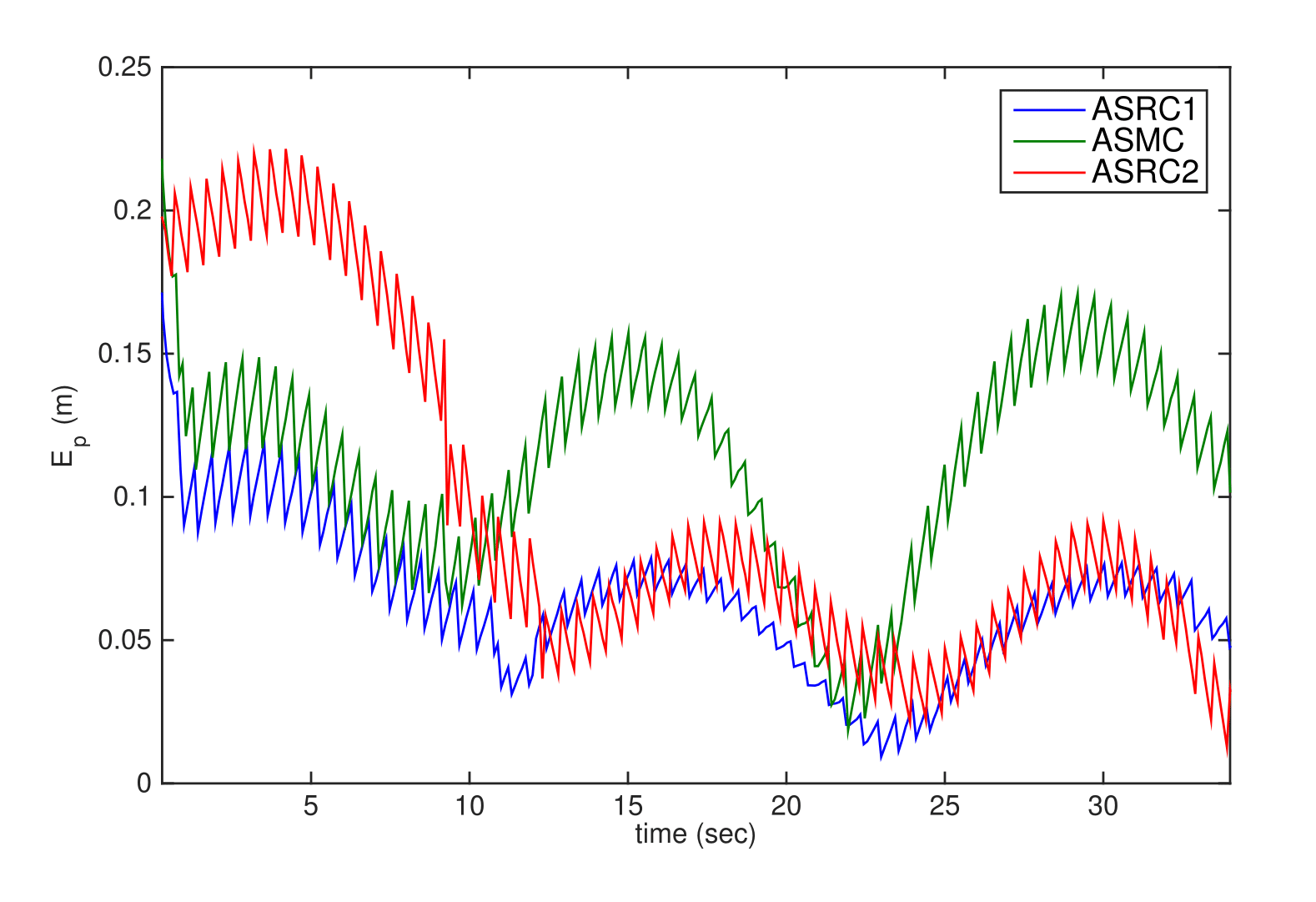} 
      \caption{Performance comparison between various controllers.}
      \label{fig:per_com_new}
   \end{figure}

\begin{figure}[h!]
      \centering
     \includegraphics[width=3.5 in,height=2.5 in]{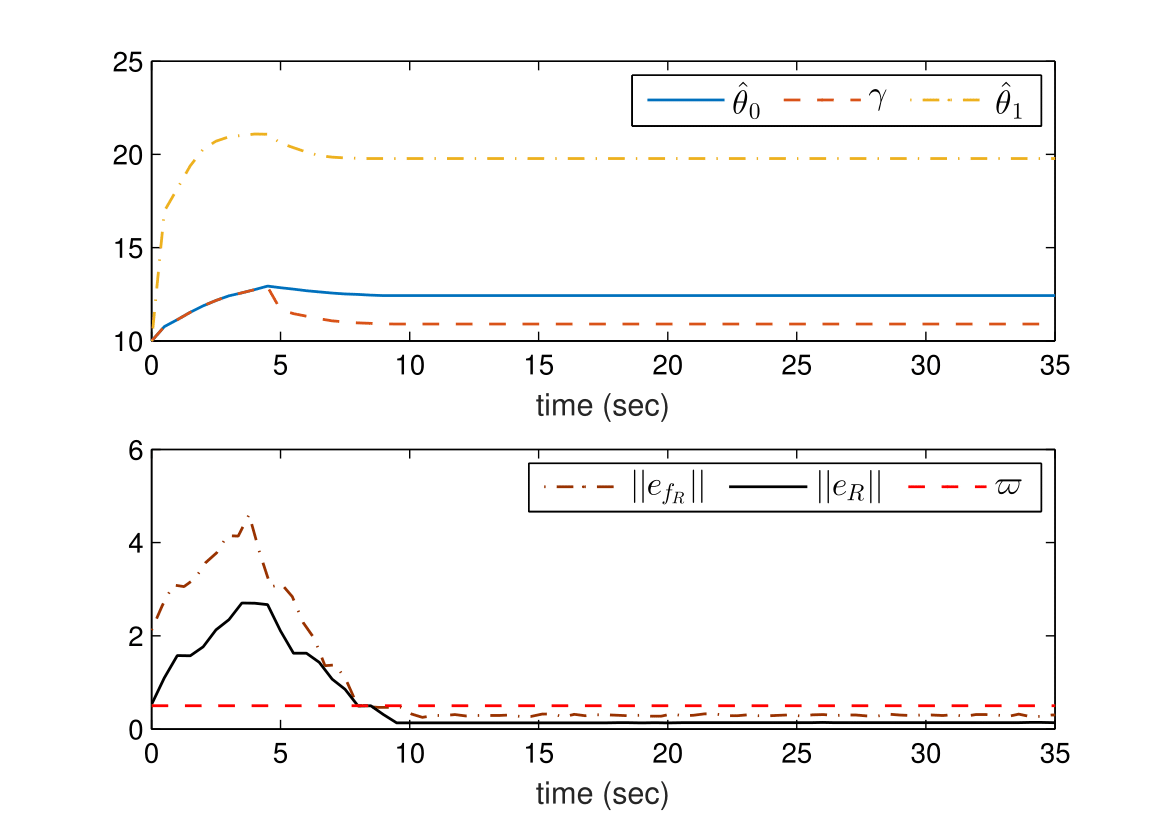} 
      \caption{Switching gain evaluation of ASRC2.}
      \label{fig:asrc_gain_low_init}
   \end{figure}

It can be noticed from Fig. \ref{fig:per_com_new} and \ref{fig:asrc_gain_low_init} that initially the tracking error is high for ASRC2 compared to ASRC1 and ASMC (initial gain $K(t_0)=35$) due to low initial gains. However, at $t \geq 5$sec tracking accuracy of ASRC2 begins to improve as the gains became sufficient enough to negotiate the uncertainties and eventually the tracking performance of ASRC2 is found to be similar to ASRC1 from $t \geq 12$sec and much improved compared to ASMC. This proves that the proposed adaptive law can perform satisfactorily even with low initial conditions of the gains.

Another important aspect to verify is whether ASRC2 can alleviate the over- and under-estimation issue similar to ASRC1. It can verified from Fig. \ref{fig:asrc_gain_low_init} that when $||e_{f_R}|| > \varpi$, the gains follow the pattern of $||e_R||$ according to the adaptive laws (46)-(47). Due to the initial low values, $||e_R||$ increases and so do the gains; similarly at $t \geq 5$sec the gains decrease as $||e_R||$ decreases. Further, at $t \geq 8$sec the gains remain unchanged when they were sufficient to keep the filtered tracking error $||e_{f_R}|| \leq \varpi$, according to the law (48), thus overcoming the underestimation problem. Moreover, that gains do not increase during $t=3.78-8$sec when $||e_R||$ decrease sand thus avoids the overestimation problem. \textbf{Hence, low initial gain conditions do not affect the capability ASRC2 in alleviating the over- and under-estimation problem.}

Moreover, it can be noticed from the first conditions of (46)-(47) that the rate of increment of $\hat{\theta}_0$ and $\gamma$ are same; hence they have similar value in Fig. \ref{fig:asrc_gain_low_init} when $||e_R||$ increases during $t=0-3.7$sec (please note that we have selected $\alpha_0=\alpha_3=10$ in the experiment) . However, their rate of falling are different (second conditions of (46)-(47)); thus, they exhibit different falling pattern in Fig. \ref{fig:asrc_gain_low_init}. }

\section{Conclusions}
A novel ASRC law is proposed for a class of uncertain EL systems where the upper bound of uncertainty possesses a LIP structure. The benefit of the ASRC lies in the fact that it is independent of the nature of the uncertainty and can negotiate uncertainties which can be LIP or NLIP. ASRC does not presume the overall uncertainty to be bounded by a constant and avoids putting any prior restriction on states. Moreover, the proposed adaptive law alleviates the overestimation-underestimation problem of switching gain. The experimental results validate the efficacy of the proposed control law in comparison with the existing adaptive sliding mode control. The future work would be to extend the ASRC law for systems with unmatched disturbances.


\begin{thebibliography}{99}
\bibitem{Ref:2}
X. Liu, H. Su, B. Yao and J. Chu, Adaptive robust control of a class of uncertain nonlinear systems with unknown sinusoidal disturbances, \emph{47th  IEEE Conference on  Decision and Control}, pp. 2594-2599, 2008.

%
%
%
%

\bibitem{Ref:23}
B. Bandyopadhayay, S. Janardhanan and  S. K. Spurgeon, \emph{Advances in Sliding Mode Control}, Springer-Verlag, New York, 2013.

\bibitem{Ref:13}
X. Zhu, G. Tao, B. Yao and J. Cao, Adaptive robust posture control of parallel manipulator driven by pneumatic muscles with redundancy, \emph{IEEE/ASME Transactions on Mechatronics}, vol. 13(4), pp. 441-450, 2008.
\bibitem{Ref:14}
X. Zhu, G. Tao, B. Yao and J. Cao, Integrated direct/indirect adaptive robust posture control of parallel manipulator driven by pneumatic muscles, \emph{IEEE Transactions on Control System Technology}, vol. 17(3), pp. 576-588, 2009.
\bibitem{Ref:15}
G. Zhang, J. Chen and Z. Lee, Adaptive robust control of servo mechanisms with partially known states via dynamic surface control approach, \emph{IEEE Transactions on Control System Technology}, vol. 18(3), pp. 723-731, 2010.

\bibitem{Ref:15_1}
C.Guan and S. Pan, Nonlinear adaptive robust Control of single-rod electro-hydraulic actuator with unknown nonlinear parameters, \emph{IEEE Transactions on Control System Technology}, vol. 16(3), pp. 434-445, 2008. 

\bibitem{Ref:15_2}
C. Hu, B. Yao, Z. Chen, and Q. Wang, Adaptive robust repetitive control of an industrial biaxial precision gantry for contouring tasks, \emph{IEEE Transactions on Control System Technology}, vol. 19(6), pp. 1559-1568, 2011. 

\bibitem{Ref:16}
W. Sun, Z. Zhao and H. Gao, Saturated adaptive robust control for active suspension systems, \emph{IEEE Transactions on Industrial Electronics}, vol. 60(9), pp.  3889-3896, 2013.
\bibitem{Ref:17}
S. Islam, P. X. Liu and A. E. Saddik, Robust control of four rotor unmanned aerial vehicle with disturbance uncertainty, \emph{IEEE Transactions on Industrial Electronics}, vol. 63(3), pp. 1563-1571, 2015.
\bibitem{Ref:18}
Z. Chen, B. Yao and Q. Wang, $\mu$-synthesis based adaptive robust posture control of linear motor driven stages with high frequency dynamics: a case study, \emph{IEEE/ASME Transactions on Mechatronics}, vol. 20(3), pp. 1482-1490, 2015.

\bibitem{Ref:19}
Z. Liu, H. Su and S. Pan, A new adaptive sliding mode control of uncertain nonlinear dynamics, \emph{Asian Journal of Control}, vol. 16(1), pp. 198-208, 2014.

\bibitem{Ref:32-2}
Q. Chen, L. Tao, Y. Nan and X. Ren, Adaptive nonlinear sliding mode control of mechanical servo system with lugre friction
compensation, \emph{ASME Journal of Dynamic Systems, Measurement, and Control}, vol. 138, DOI: 10.1115/1.4032068, 2016.

\bibitem{Ref:20}
A. Nasiri, S. K. Nguang and A. Swain, Adaptive sliding mode control for a class of MIMO nonlinear systems with uncertainty, \emph{Journal of The Franklin Institute}, vol. 351, pp. 2048-2061, 2014.
\bibitem{Ref:20-2}
Q. Meng, T. Zhang, X. Gao and J. Y. Song, Adaptive sliding mode fault-tolerant control of the uncertain stewart platform based on offline multibody Dynamics, \emph{IEEE/ASME Transactions on Mechatronics}, vol. 19(3), pp. 882-894, 2014.
\bibitem{Ref:32}
S. Liu, H. Zhou, X. Luo and J. Xiao, Adaptive sliding fault tolerant control for nonlinear uncertain active suspension systems, \emph{Journal of The Franklin Institute}, DOI 10.1016/j.jfranklin.2015.11.002, 2015. 

\bibitem{Ref:32-1}
S. Mobayen, An adaptive chattering-free PID sliding mode control based on dynamic sliding manifolds for a class of uncertain
nonlinear systems, \emph{Nonlinear Dynamics}, DOI 10.1007/s11071-015-2137-7, 2015.







\bibitem{Ref:21}
F. Plestan, Y. Shtessel, V. Bregeault and A. Poznyak, New methodologies for adaptive sliding mode control, \emph{International Journal of Control}, vol. 83(9), pp. 1907-1919, 2010.
\bibitem{Ref:22}
F. Plestan, Y. Shtessel, V. Bregeault and A. Poznyak, Sliding mode control with gain adaptation - application to an electropneumatic actuator state, \emph{Control Engineering Practice}, vol. 21, pp. 679-688, 2013.




\bibitem{Ref:ut}
V. Utkin and  A. Poznyak, Adaptive sliding mode control with application to super-twist algorithm: equivalent control method, \emph{Automatica}, vol. 49, pp. 39-47, 2013.

\bibitem{Ref:frid}
J. A. Moreno, D. Y. Negrete, V. Torres-Gonzalez and L. Fridman, Adaptive continuous twisting algorithm, \emph{International Journal of Control}, DOI: 10.1080/00207179.2015.1116713, 2015.

\bibitem{Ref:bio}
V. Adetola, M. Guay and D. Lehrer, Adaptive estimation for a class of nonlinearly
parametrized dynamical systems, \emph{IEEE Transactions on Automatic Control}, vol. 59(10), pp. 2818-2824, 2014.



\bibitem{Ref:nlip}
A. M. Annaswamy, F. P. Skantze and A-P Loh, Adaptive control of continuous time systems with
convex/concave Parametrization, \emph{Automatica}, vol. 34(1), pp. 33-49, 1998.


\bibitem{Ref:spong}
M. W. Spong, S. Hutchinson and M. Vidyasagar, \emph{Robot Dynamics and Control}, John Wiley and Sons, New York, 2004.

\bibitem{Ref:n1}
W. Shang and S. Cong, Motion Control of parallel manipulators using acceleration feedback, \emph{IEEE Transactions on Control System Technology}, vol. 20(1), pp. 314-320, 2014. 

\bibitem{Ref:n2}
T. Y. Choi, B. S. Choi and K. H. Seo, Position and compliance control of a pneumatic muscle actuated manipulator for enhanced safety, \emph{IEEE Transactions on Control System Technology}, vol. 19(4), pp. 832-842, 2011. 

\bibitem{Ref:self}
S. Roy, S. Nandy, R. Ray and S. N. Shome, Time delay sliding mode control nonholonomic wheeled mobile robot: experimental validation, \emph{IEEE International Conference on  Robotics and Automation}, pp. 2886-2892, 2014.

\bibitem{Ref:n3}
J. Wu, J. Huang, Y. Wang and K. Xing, Nonlinear disturbance observer-based dynamic surface control for trajectory tracking of
pneumatic muscle system, \emph{IEEE Transactions on Control System Technology}, vol. 22(2), pp. 440-455, 2014. 


\bibitem{Ref:hadad}
W. M. Haddad and V. Chellaboina, \emph{Nonlinear dynamical systems and control: a Lyapunov-based approach}, Princeton University Press, 2008.



\bibitem{Ref:khalil}
H. Khalil, \emph{Nonlinear systems}, 3rd ed. Prentice Hall, 2002.

\bibitem{Ref:sir}
T. Das and I. N. Kar, Design and implementation of an adaptive fuzzy logic-based controller for wheeled mobile robots, \emph{IEEE Transactions on Control System Technology}, vol. 14(3), pp. 501-510, 2006. 


\bibitem{Ref:des}
B. Anderson, J. Jackson and M. Sitharam, Descartes' Rule of Signs Revisited, {\emph{The American Mathematical Monthly}}, vol. 150, pp. 447-451, 1998.

\bibitem{Ref:bolz}
S. B Russ, A Translation of Bolzano's Paper on the Intermediate Value Theorem, {\em{Historia Mathematica}}, vol. 2, pp. 156-185, 1980.

\bibitem{Ref:poly book}
K. T. Leung, I. A. C Mok and S. N. Suen \emph{Polynomials and functions}, Honk Kong University Press, 1992.

\bibitem{Ref:liber}
D. Liberzon, \emph{Switching in systems and control}, Birkh{\"a}user, 2003.

\bibitem{Ref:coelho}
P. Coelho and U. Nunes, Path-following control of mobile robots in presence of uncertainties, \emph{IEEE Transactions on Robotics and Automation}, vol. 21, no. 2, pp. 252-261, 2005.

\bibitem{Ref:d'andrea}
G. Campion, B. d'Andrea-Novel and G. Bastin, Modelling and state feedback control of
nonholonomic mechanical systems, \emph{30th  IEEE Conference on  Decision and Control}, pp. 1184-1189, 1991.





\end{thebibliography}
\end{document}